\documentclass[sigconf, nonacm]{acmart}
\newcommand\vldbdoi{XX.XX/XXX.XX}
\newcommand\vldbpages{XXX-XXX}
\newcommand\vldbvolume{14}
\newcommand\vldbissue{1}
\newcommand\vldbyear{2020}
\newcommand\vldbauthors{\authors}
\newcommand\vldbtitle{\shorttitle} 
\newcommand\vldbavailabilityurl{https://github.com/ForwardStar/VectorMaton}
\newcommand\vldbpagestyle{plain}

\usepackage[ruled,linesnumbered,vlined]{algorithm2e}
\usepackage{tikz}
\usepackage{hyperref}
\usepackage{listings}
\usepackage{enumitem}
\usepackage{subfigure}
\usepackage{color}
\usepackage{pgfplots}
\usepackage{multirow}
\usetikzlibrary{patterns}
\usepackage{balance}
\usepackage{subfiles}
\usepackage{tabularx}
\usepackage[inkscapeformat=png]{svg}

\author{Haoxuan Xie}
\email{haoxuan001@e.ntu.edu.sg}
\affiliation{%
  \institution{Nanyang Technological University}
  \country{Singapore}
}

\author{Siqiang Luo}
\email{siqiang.luo@ntu.edu.sg}
\affiliation{%
  \institution{Nanyang Technological University}
  \country{Singapore}
}

\newtheorem{definition}{Definition}
\newtheorem{example}{Example} 
 
\newtheorem{theorem}{Theorem} 
\newtheorem{lemma}{Lemma}
\AtBeginDocument{%
  \providecommand\BibTeX{{%
    \normalfont B\kern-0.5em{\scshape i\kern-0.25em b}\kern-0.8em\TeX}}}

\newcounter{siqiang}
\numberwithin{siqiang}{section}

\begin{document}
\title{VectorMaton: Efficient Vector Search with Pattern Constraints via an Enhanced Suffix Automaton}

\renewcommand{\shortauthors}{Xie et al.}
\begin{abstract}
Approximate nearest neighbor search (ANNS) has become a cornerstone in modern vector database systems. Given a query vector, ANNS retrieves the closest vectors from a set of base vectors. In real-world applications, vectors are often accompanied by additional information, such as sequences or structured attributes, motivating the need for fine-grained vector search with constraints on this auxiliary data. Existing methods support attribute-based filtering or range-based filtering on categorical and numerical attributes, but they do not support pattern predicates over sequence attributes. In relational databases, predicates such as \texttt{LIKE} and \texttt{CONTAINS} are fundamental operators for filtering records based on substring patterns. As vector databases increasingly adopt SQL-style query interfaces, enabling pattern predicates over sequence attributes (e.g., texts and biological sequences) alongside vector similarity search becomes essential. In this paper, we formulate a novel problem: given a set of vectors each associated with a sequence, retrieve the nearest vectors whose sequences contain a given query pattern. To address this challenge, we propose \textsc{VectorMaton}, an automaton-based index that integrates pattern filtering with efficient vector search, while maintaining an index size comparable to the dataset size. Extensive experiments on real-world datasets demonstrate that VectorMaton consistently outperforms all baselines, achieving up to $10\times$ higher query throughput at the same accuracy and up to $18\times$ reduction in index size.
\end{abstract}

\keywords{Vector database, constrained approximate nearest neighbor search}
\settopmatter{printfolios=true}
\maketitle
\pagestyle{\vldbpagestyle}
\begingroup\small\noindent\raggedright\textbf{PVLDB Reference Format:}\\
\vldbauthors. \vldbtitle. PVLDB, \vldbvolume(\vldbissue): \vldbpages, \vldbyear.\\
\href{https://doi.org/\vldbdoi}{doi:\vldbdoi}
\endgroup
\begingroup
\renewcommand\thefootnote{}\footnote{\noindent
This work is licensed under the Creative Commons BY-NC-ND 4.0 International License. Visit \url{https://creativecommons.org/licenses/by-nc-nd/4.0/} to view a copy of this license. For any use beyond those covered by this license, obtain permission by emailing \href{mailto:info@vldb.org}{info@vldb.org}. Copyright is held by the owner/author(s). Publication rights licensed to the VLDB Endowment. \\
\raggedright Proceedings of the VLDB Endowment, Vol. \vldbvolume, No. \vldbissue\ %
ISSN 2150-8097. \\
\href{https://doi.org/\vldbdoi}{doi:\vldbdoi} \\
}\addtocounter{footnote}{-1}\endgroup

\ifdefempty{\vldbavailabilityurl}{}{
\vspace{.3cm}
\begingroup\small\noindent\raggedright\textbf{PVLDB Artifact Availability:}\\
The source code, data, and/or other artifacts have been made available at \url{\vldbavailabilityurl}.
\endgroup
}
\section{Introduction}
\label{sec:intro}
The rapid advancement of deep learning has enabled the transformation of unstructured data (e.g., texts, images and graphs) into high-dimensional vector representations \cite{word2vec,node2vec,10.1109/TPAMI.2013.50,doc2vec}, giving rise to vector databases as a new data management paradigm.
Unlike traditional database queries that rely on exact matching or relational predicates, vector search focuses on retrieving the most similar vectors according to a distance or similarity metric.
However, exact nearest neighbor search in high-dimensional spaces is computationally prohibitive due to the curse of dimensionality \cite{ringcovertree}.
As a result, approximate nearest neighbor search (ANNS) has emerged as the dominant approach for scalable vector retrieval and has attracted significant attention from both academia and industry \cite{lshann,yufeiann,hnsw,10.1145/3709693,nsg,pq,faiss,diskann}.
\begin{figure}[t]
    \centering
    \includegraphics[width=1.\linewidth]{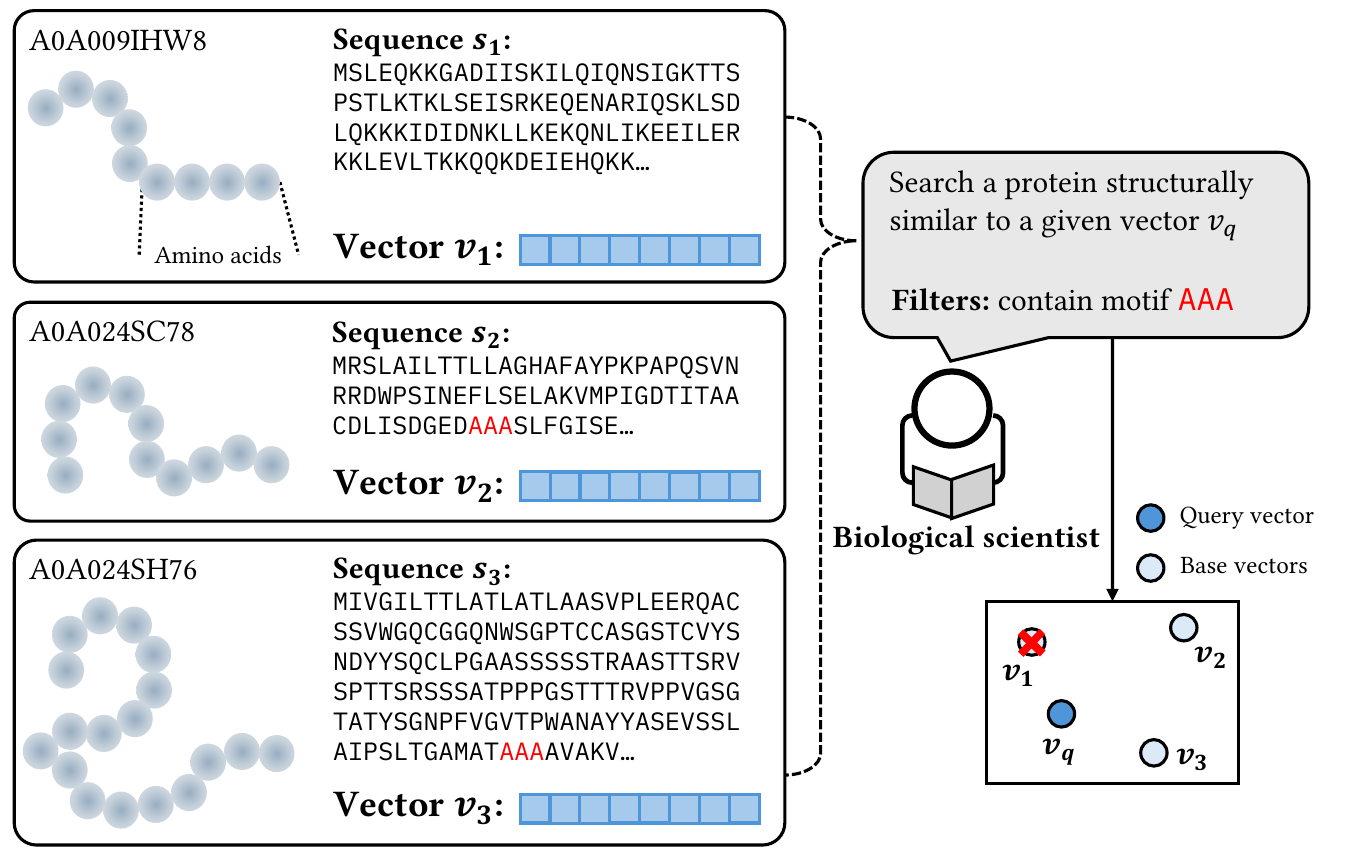}
    \caption{An example of pattern-constrained ANNS in biological database.}
    \label{fig:example}
\end{figure}

\begin{figure*}[t]
\centering
\subfigure[\normalsize Comparison of different methods ($m$ is the total sequence length).]{
\begin{minipage}[b]{0.55\linewidth}
\centering
\resizebox{\linewidth}{!}{
\begin{tabular}{l|cc|cc|c}
\hline
\multirow{2}{*}{\textbf{Methods}} 
& \multicolumn{2}{c|}{\textbf{Short pattern}} 
& \multicolumn{2}{c|}{\textbf{Long pattern}}
& \multirow{2}{*}{\textbf{Index Size}} \\
\cline{2-5}
& \textbf{Efficiency} 
& \textbf{Quality} 
& \textbf{Efficiency} 
& \textbf{Quality} 
& \\
\hline
PreFiltering  & Low      & Optimal & Moderate  & Optimal & Small \\
PostFiltering & Moderate & Moderate & Moderate & Low     & Small \\
OptQuery      & High     & High    & High     & High   & $O(m^2)$ \\
\textsc{VectorMaton} & High & High & High & High & $O(m^{1.5})$ \\
\hline
\end{tabular}
}
\label{tab:comparison}
\end{minipage}}
\hfill
\subfigure[\normalsize Impact of pattern length.]{
\centering
\includegraphics[width=.4\linewidth]{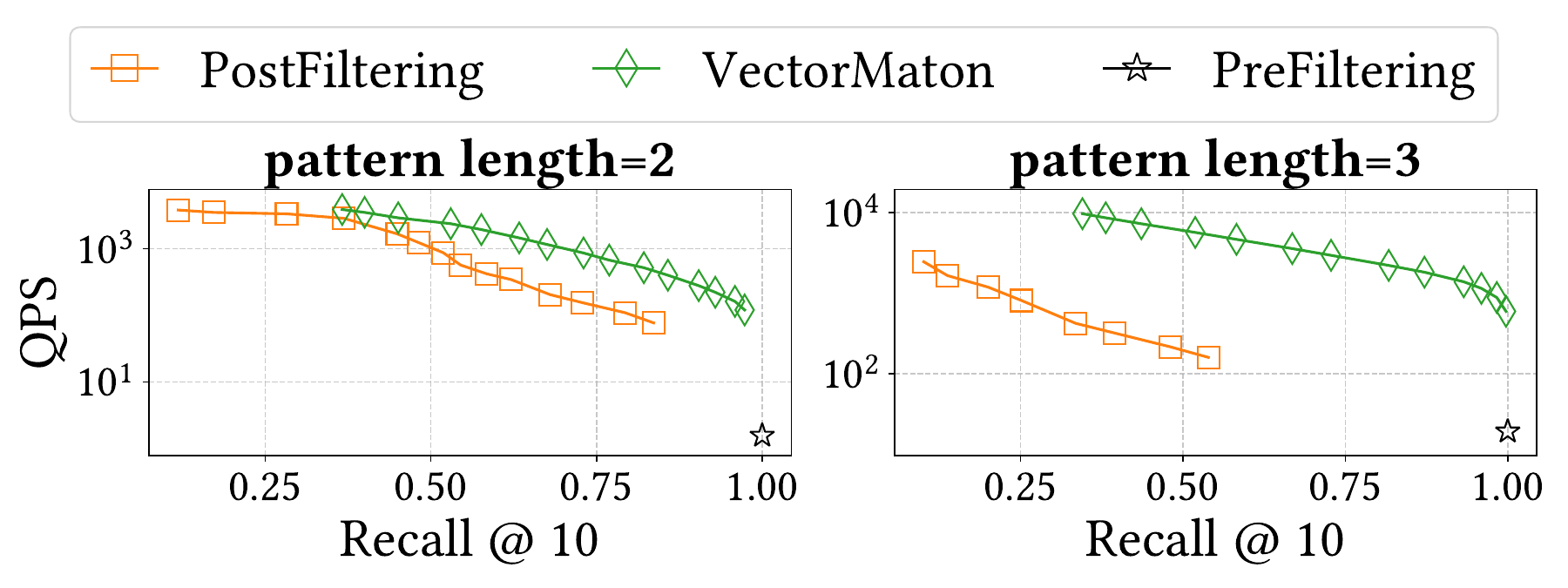}
\label{fig:pattern}
}
\vspace{-4mm}
\caption{Summary of challenges and methods.}
\vspace{-4mm}
\end{figure*}

\vspace{1mm}
\noindent\textbf{Motivation.} In many real-world applications, vector embeddings are accompanied by associated sequences or other structured attributes. For example, in scientific document retrieval, the content of a paper can be embedded into a vector representation, while the paper title is naturally modeled as a sequence associated with the vector~\cite{arxiv-small}. While recent studies have explored constrained approximate nearest neighbor search through \emph{attribute-in} filtering~\cite{unify,attributeannsurvey,filtereddiskann,ivf-subset} and range-based filtering~\cite{rangepq,dynamicrfann,serf,digra,edyamicrfann,irangegraph}, these techniques primarily focus on numeric or categorical constraints.
In contrast, \textbf{pattern constraints over sequences}, such as substring or motif matching, remain largely unexplored in the context of vector search.
Given a query vector $v$ and a pattern $p$, \emph{ANNS with pattern constraints} aims to retrieve the nearest vectors whose associated sequences contain the pattern $p$.
Such queries naturally arise in a wide range of real-world applications, including but not limited to the following examples:
\begin{itemize}[leftmargin=*]
    \item \textbf{Supporting \texttt{LIKE} and \texttt{CONTAINS} predicates in vector SQL.}
    In relational databases, predicates such as \texttt{LIKE} and \texttt{CONTAINS} are fundamental building blocks for SQL queries. These predicates largely involve substring queries, i.e., retrieving data that contains a specific pattern string.
    As vector databases increasingly adopt SQL-like query interfaces, supporting such pattern predicates alongside vector similarity search becomes essential. For example, in a document database, users may not only want to retrieve documents with similar embeddings, but also require the documents to contain a specific keyword. While systems such as ElasticSearch \cite{elasticsearch-vector} and PostgreSQL \cite{pgvector} have added support for hybrid vector similarity and keyword or substring queries, they generally rely on either a filter-then-search or a search-then-filter execution strategy. These approaches treat vector search and pattern filtering as separate components and do not provide a unified index structure, leading to low query efficiency and quality. Efficiently integrating pattern constraints with approximate nearest neighbor search remains an open challenge.
    \item \textbf{Similar protein retrieval with motif
    matching.} In biological databases, proteins and genomic sequences are increasingly represented by learned vector embeddings that capture functional or structural similarity (e.g., AlphaFold \cite{alphafold}, ProtBERT \cite{protbert}). For instance, the SwissProt~\cite{prot} dataset contains proteins with a total sequence length reaching hundreds of millions of symbols. These vector embeddings are inherently tied to their underlying sequences, where patterns such as conserved motifs or binding sites play a critical role. For instance, a biologist may wish to retrieve proteins that are functionally similar to a query protein (i.e., vector similarity) while containing a specific amino-acid motif (e.g., \texttt{Cys--X$_2$--Cys}) that is essential for metal binding \cite{Bailey2009-kl}. Such queries require jointly enforcing vector similarity and pattern constraints over sequences. Figure \ref{fig:example} shows a toy example of this scenario.
\end{itemize}

\noindent\textbf{Challenge: Pattern constraints exhibit various selectivity and demand specialized index designs.} In constrained vector search, different filtering conditions induce varying selectivity. Existing approaches to filtered approximate nearest neighbor search can be classified into three categories: \emph{pre-filtering}, \emph{post-filtering}, and \emph{joint filtering}~\cite{attributeannsurvey}. Pre-filtering first identifies qualifying vectors using an external index on vector-associated attributes and then performs a brute-force search over the filtered subset. However, when filter selectivity is high (i.e., when a large fraction of vectors qualify), the filtered subset becomes large, leading to substantial query overhead and poor scalability. Post-filtering, in contrast, first performs ANNS on the full vector index and subsequently filters the retrieved candidates. This approach is effective when selectivity is high enough that most nearest neighbors satisfy the constraint. However, under low-selectivity conditions, a large portion of retrieved candidates are discarded, resulting in degraded recall or increased query cost. Unfortunately, \textbf{pattern constraints over sequences exhibit various selectivity}. Complex or long patterns typically match only a small number of sequences, leading to low selectivity, whereas short or simple patterns may match a large fraction of the dataset. As a result, neither pre-filtering nor post-filtering can robustly handle pattern-constrained queries across different selectivity. Joint filtering aims to integrate filtering conditions directly into the ANNS process, but typically requires specialized index designs tailored to specific types of constraints. Existing joint-filtering indices primarily target \emph{attribute-in} and range-based predicates over categorical or numerical attributes. These methods are designed to support operators such as equality, subset inclusion, and numerical comparison. However, such techniques cannot be directly extended to pattern-based predicates over sequence attributes. Efficiently supporting \emph{pattern constraints over sequences} within ANNS remains largely unexplored and poses a significant open challenge.

Table~\ref{tab:comparison} shows the undesirable query trade-offs of PreFiltering and PostFiltering approaches. For instance, as illustrated in Figure \ref{fig:pattern}, on SIFT dataset \cite{pq} with synthetic sequences, as the query pattern length grows, the recall of PostFiltering shrinks significantly, and PreFiltering remains relatively low query efficiency.

\vspace{1mm}
\noindent\textbf{Our solution.} Our goal is to design an index that supports \emph{ANNS with pattern constraints}, while providing (i) index size comparable to the dataset size and (ii) query efficiency and result quality comparable to unconstrained vector search.

To optimize query efficiency and avoid unnecessary computations, a straightforward approach (OptQuery) is to query on the vector index containing only those vectors satisfying the given pattern constraint. This requires constructing the corresponding index for every possible pattern. However, this strategy is highly space consuming. Let $m$ denote the total length of all sequences. In the worst case, the number of distinct substrings across all sequences is $O(m^2)$, leading to unaffordable space cost. While inverted file indexes \cite{ivf,ivf-knuth} reduce space consumption by avoiding materialization of all substrings, its query processing requires merging multiple posting lists that may involve duplicated vectors. The query cost grows as the query pattern size increases since more posting lists would be involved, and finally resulting in query overhead.

To overcome these limitations, we present \textsc{VectorMaton}, an automaton-based vector index for pattern-constrained ANNS. 
The key insight is that many patterns exhibit identical occurrence behavior across sequences. Such patterns can therefore be grouped and share the same index. For example, consider two sequences ``ab'', ``aab''. Then patterns ``b'' and ``ab'' are considered equivalent since they both occur at the end of each sequence. To realize this idea, we borrow the ideas from formal language theory, and leverage the \emph{suffix automaton} (SAM)~\cite{automaton,gsa}.
While SAM has traditionally been used for substring matching, they are not designed for pattern-constrained vector search. In this work, we extend this technique to both index a collection of sequences and integrate it with approximate nearest neighbor search. Since each pattern can still be directly associated with the vectors whose sequences contain it, the query efficiency remains comparable to OptQuery. Our resulting structure achieves efficient query processing while maintaining a worst-case space complexity of $O(m^{1.5})$, where $m$ is the total sequence length. We further propose two space-saving strategies to reuse index across states to reduce redundancy and selectively construct index based on pattern selectivity to adapt to different query conditions. Together, these strategies enable a unified and scalable solution for pattern-constrained ANNS.

Empirically, we observe that the index size of \textsc{VectorMaton} grows near-linear with respect to the total sequence length, while delivering substantial query performance improvements over pre-filtering and post-filtering baselines. This scalability enables \textsc{VectorMaton} to handle datasets with up to hundreds of millions sequence length. Experimental results show \textsc{VectorMaton} achieves superior query efficiency and recall as well as significantly reduced index size when comparing with the proposed baselines and existing vector search engines pgvector \cite{pgvector} and ElasticSearch \cite{elasticsearch-vector}.

\vspace{1mm}
\noindent\textbf{Contributions.} Our contributions are summarized as follows.
\begin{itemize}[leftmargin=*]
\item We formally define the problem of \emph{ANNS with pattern constraints} that, to the best of our knowledge, has not been systematically studied in prior work.
\item We propose \textsc{VectorMaton}, a novel index built on an enhanced suffix automaton that enables efficient pattern-constrained ANNS while maintaining empirically near-linear space complexity.
\item We conduct extensive experimental evaluations on diverse real-world datasets, demonstrating that \textsc{VectorMaton} consistently outperforms existing baselines, achieving up to $10\times$ higher query throughput at comparable accuracy and up to $18\times$ reduction in index size.
\end{itemize}
\section{Preliminaries}
\subsection{Problem definition}
Let $D=\left\{(v_1,s_1),(v_2,s_2),\cdots,(v_n,s_n)\right\}$ be a dataset of size $n$, where $v_i$ denotes the $i$-th vector and $s_i$ denotes the sequence associated with the $i$-th vector. A sequence $s$ can be a string, protein sequence, DNA sequence, etc. Let $m=\sum_{i=1}^n|s_i|$ be the total sequence length. Let $V_p$ be the set of vectors whose associated sequences contain a pattern $p$. Pattern $p$ is also a sequence, and a sequence $s$ contains $p$ if and only if $p$ occurs as a consecutive subsequence in $s$. Our task is defined as follows.
\begin{definition}[ANNS with pattern constraints]
\label{def:problem}
    Given a query vector $v_q$, a query pattern $p$, and an integer $k$, find a set of vectors $V_{o}$ approximating the top-$k$ vectors in $V_p$ such that their distances are closest to $v_q$ (denoted as $V_{k,p}$), and maximize both the answer quality (i.e., $|V_o\cap V_{k,p}|$) and query efficiency.
\end{definition}

\begin{example}
    In Figure \ref{fig:example}, each protein is embedded into a structural vector associated with its underlying sequence. The biological scientist asks which protein is most structurally similar to a given vector $v_q$ and contains motif \texttt{AAA}. Then $p=\texttt{AAA}$, $k=1$ and $V_p=\left\{v_2,v_3\right\}$. Since $v_3$ is closer to $v_p$ than $v_2$, the resulting $V_{k,p}$ should be $\left\{v_3\right\}$.
\end{example}
\begin{table}[t]
    \centering
    \resizebox{\linewidth}{!}{
    \begin{tabular}{ll}
        \hline
         \textbf{Notation} & \textbf{Description} \\
         \hline
         $n$ & the size of the dataset \\
         $m$ & the total length of all sequences \\
         $D$ & the dataset $\left\{(v_1,s_1),\cdots,(v_n, s_n)\right\}$ \\
         $V$ & base vectors $\left\{v_1,\cdots, v_n\right\}$ in the dataset \\
         $S$ & sequences $\left\{s_1,\cdots,s_n\right\}$ associated with the vectors \\
         $v_q$ & query vector \\
         $p$ & query pattern \\
         $V_p$ & base vectors whose associated sequences contain $p$ \\
         $V_{k,p}$ & top-$k$ vectors in $V_p$ closest to the query vector \\
         $V_o$ & approximated vector set of $V_{k,p}$ \\
         $M$ & the maximum degree in HNSW graph \\
         $ef\_con$ & capacity of the candidate list in HNSW construction \\
         $ef\_search$ & capacity of the candidate list in HNSW query \\
         $T$ & threshold of constructing HNSW graph in \textsc{VectorMaton} \\
         \hline
    \end{tabular}}
    \caption{Summary of notations.}
    \label{tab:notation}
    \vspace{-4mm}
\end{table}
\subsection{Hierarchical Navigable Small Worlds}
The \emph{Hierarchical Navigable Small Worlds} (HNSW) graph~\cite{hnsw} is a widely adopted graph-based index for ANNS in high-dimensional vector spaces. HNSW represents vectors as nodes in a multi-layer graph, where edges are more likely to connect vectors that are close in the embedding space. We integrate HNSW into our automaton structure to achieve strong empirical query efficiency and accuracy. In the following, we provide a brief overview of HNSW.

\vspace{1mm}
\noindent\textbf{HNSW structure.}
HNSW organizes vectors into a multi-layer graph. The bottom layer contains all vectors and provides dense local connections among nearby vectors, while higher layers contain progressively fewer nodes and act as long-range “highways” that enable rapid navigation toward relevant regions of the space. 
The maximum degree of a node is limited by the hyperparameter $M$. Consequently, the overall space complexity of HNSW is linear in the number of indexed vectors.


\vspace{1mm}
\noindent\textbf{Query processing in HNSW.}
To answer a query of top-$k$ nearest vectors, HNSW performs a greedy search starting from an entry point at the top layer, iteratively moving to neighboring nodes that are closer to the query vector until we find the approximated nearest one, and uses it as the entry node of the next layer. After descending to the bottom layer, HNSW executes a bounded best-first search to find a set of nearest vectors and selects the top-$k$ as the results. The trade-off between query latency and recall is controlled by the parameter \textit{ef\_search}, which specifies the maximum size of the candidate list maintained during the bottom-layer search and is similar to its construction parameter \textit{ef\_con}. Larger \textit{ef\_search} increase recall at the cost of higher query latency, while smaller \textit{ef\_search} favor faster queries with reduced accuracy.
\subsection{Suffix automaton}
\label{sec:suffix-automaton}
The \emph{suffix automaton} (SAM) \cite{automaton,BLUMER198531} is a compact deterministic finite automaton that represents all suffixes of a given sequence. The general suffix automaton (GSA) further extends SAM to handle a collection of multiple sequences. Since a \texttt{CONTAINS} predicate checks whether a pattern $p$ appears as a substring, it can be equivalently viewed as checking whether $p$ is a prefix of at least one suffix of the sequence. Therefore, both SAM and GSA naturally support efficient verification of whether a pattern is contained in a single sequence or in a sequence collection.
\begin{figure}[t]
    \centering
    \includegraphics[width=0.85\linewidth]{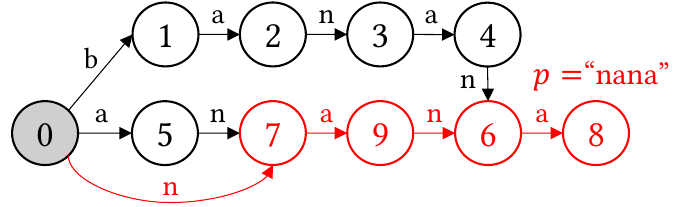}
    \caption{The SAM structure of the sequence ``banana``.}
    \label{fig:sam}
\end{figure}

\vspace{1mm}
\noindent\textbf{Overview of SAM.}
A suffix automaton (SAM) is represented as a directed acyclic graph whose nodes, called \emph{states}, compactly represent groups of substrings with shared characteristics. Directed edges, referred to as \emph{transitions}, are labeled with single characters and indicate how substrings can be extended. The automaton has a unique source node called the \emph{initial state}, from which all other states are reachable. There is a one-to-one mapping between substrings of the sequence and transition paths starting from the initial state. While different substrings may share prefixes or terminate at the same state, the automaton compactly merges these overlaps. This design enables SAM to represent all substrings in linear space, while allowing pattern checks to be performed by a single left-to-right traversal of the automaton.

\begin{example}
Figure~\ref{fig:sam} shows the suffix automaton constructed for the sequence ``banana'', where state~0 is the initial state. Given the pattern $p=\text{``nana''}$, we start from state~0 and follow transitions labeled \texttt{n}, \texttt{a}, \texttt{n}, and \texttt{a}. Since the traversal successfully reaches a state after consuming all characters of $p$, the pattern ``nana'' is contained in the sequence ``banana''.
\end{example}

\noindent\textbf{SAM construction.} The suffix automaton of a sequence is built incrementally with one character at a time. Let $last$ denote the most recently created state corresponding to the full prefix constructed so far. When a new character $c$ arrives, we create a new state and add a transition labeled $c$ from $last$. This extends the suffixes in $last$ to end at the new position.

We then follow a small chain of \emph{suffix links} \cite{automaton} from $last$ to update earlier states that should also gain a transition labeled $c$. These suffix links connect states representing substrings with shared suffixes and allow the automaton to remain minimal. This incremental procedure constructs the SAM in $O(|s|)$ time for a sequence $s$.
\begin{example}
Figure~\ref{fig:sam-extend} illustrates a partial construction of the suffix automaton for the sequence ``banana''. After processing the prefix ``ba'', state~2 is the current $last$ state and represents the suffixes ``ba'' and ``a''. When the next character `n' is added, these suffixes are extended to ``ban'' and ``an'', which are captured by a new state~3. In addition, the single-character suffix ``n'' is also introduced. To account for this new suffix, the construction follows the suffix link of state~2 and adds a transition labeled `n' from the initial state~0 to state~3. This step ensures that all new suffixes ending with `n' are properly represented in the automaton.
\end{example}

\begin{figure}[t]
    \centering
    \includegraphics[width=0.83\linewidth]{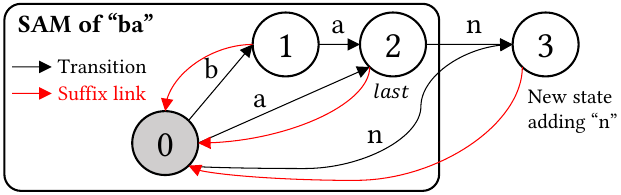}
    \caption{An example of incrementally constructing SAM.}
    \label{fig:sam-extend}
\end{figure}
\section{Proposed baselines}
\subsection{Optimal query baseline for HNSW}
We begin by addressing a fundamental question in solving the pattern-constrained ANNS problem in Definition~\ref{def:problem}:
\emph{under the HNSW query framework, what is the best achievable trade-off between query efficiency and query accuracy?}

An intuitive solution is obtained by decoupling pattern filtering from vector search. Specifically, let $V_p\subseteq V$ denote the subset of items whose associated sequences satisfy the pattern constraint $p$. We first identify \( V_p \), and then build and query an HNSW index over \emph{only} these qualified items. Running a standard HNSW search on this filtered graph avoids any exploration of disqualified items without affecting the accuracy.
\begin{algorithm}[ht]
\small
\caption{OptQuery}
\label{alg:pc-hnsw}

\SetKwProg{Fn}{Function}{:}{}
\SetKwFunction{Build}{Build}
\SetKwFunction{Query}{Query}

\Fn{\Build{$n,V,S$}}{
    $H\gets$ a hashmap that maps patterns to HNSW graphs\;
    \For{$i\gets1,2,\cdots n$}{
        \For{$j\gets 1,2,\cdots,|s_i|$}{
            \For{$k\gets j,j+1,\cdots,|s_i|$}{
                Extract substring $s_i[j..k]$ as a pattern $p$\;
                Insert $v_i$ to the graph of $H(p)$\;
            }
        }
    }
}

\BlankLine

\Fn{\Query{$v_q, p, k, ef\_search$}}{
    \lIf{$H(p)$ does not exist}{
        \textbf{return} $\emptyset$
    }
    Search in the HNSW graph $H(p)$ with parameters $v_q,k,ef\_search$ and return $k$ results\;
}

\end{algorithm}

Algorithm~\ref{alg:pc-hnsw} presents a straightforward baseline that achieves this optimal query behavior under the HNSW framework. To construct the index, we enumerate all patterns that appear in the dataset and build an independent HNSW graph for each pattern $p$ over the filtered vector set $V_p$. Specifically, for each sequence $s_i$, we extract all of its substrings and insert the associated vector $v_i$ into the HNSW graph corresponding to each substring pattern. At query time, we first check whether the queried pattern exists in the index. If so, we directly perform an HNSW search on the graph associated with that pattern; otherwise, no feasible result exists and the query terminates immediately.
\begin{theorem}
\label{theorem:optquery}
Algorithm~\ref{alg:pc-hnsw} requires $O(m^2)$ index space.
\end{theorem}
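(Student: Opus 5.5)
The plan is to bound the index space by the total number of vector insertions that Algorithm~\ref{alg:pc-hnsw} performs, and then to count those insertions. The first step uses the observation from the HNSW preliminaries that each node has degree at most $M$, so an HNSW graph over $t$ vectors takes $O(Mt)=O(t)$ space (treating $M$ as a constant). The total space is then $O\big(\sum_{p}|V_p|\big)$, summed over all patterns $p$ stored as keys of $H$, plus the space for the keys themselves. Vectors are stored by identifier, and each graph node costs $O(1)$ words.

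The second step bounds $\sum_p |V_p|$. In the triple loop, each substring occurrence $s_i[j..k]$ triggers at most one insertion of $v_i$ into $H(p)$. Repeated occurrences of the same pattern in $s_i$ insert $v_i$ into the same graph, so they add nothing. Hence
\[\sum_p |V_p| \le \sum_{i=1}^n \frac{|s_i|(|s_i|+1)}{2} \le \frac{1}{2}\Big(\sum_{i=1}^n |s_i|\Big)^2 + \frac{m}{2} = O(m^2),\]
using $\sum_i |s_i|^2\le (\sum_i |s_i|)^2$.

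The main obstacle is the cost of the hashmap keys: storing each distinct pattern explicitly could cost $O(m^3)$ characters. I would avoid this by representing each key as a reference $(i,j,k)$ into the stored sequences, which takes $O(1)$ words per key. The number of distinct keys is at most the number of substring occurrences, which is $O(m^2)$. Adding the $O(m)$ storage for $S$ and the $O(n)$ storage for $V$ (counting vectors of constant dimension), the total is $O(m^2)$.

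For completeness I would also remark that the bound is tight in the worst case. For example, a single sequence over a large alphabet with all characters distinct has $\Theta(m^2)$ distinct substrings, each with its own nonempty graph. This is not needed for the upper bound.
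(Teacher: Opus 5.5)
Your proposal is correct and follows the paper's argument. Both bound the number of vector insertions by $\sum_i |s_i|(|s_i|+1)/2 = O(m^2)$ and then use the constant HNSW degree bound to turn that count into space. Your extra handling of the hashmap keys is a point the paper's proof skips: explicitly stored pattern strings could cost $\Theta(m^3)$ characters, and you replace them with $O(1)$-word references $(i,j,k)$, which tightens the argument rather than changing its route.
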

\begin{proof}
For each sequence $s_i$, the number of distinct substrings is at most $\frac{|s_i|(|s_i|+1)}{2}$ by lines 4-5 in Algorithm \ref{alg:pc-hnsw}. Therefore, across all $n$ sequences, the number of vector insertions is bounded by:
\[
\sum_{i=1}^n \frac{|s_i|(|s_i|+1)}{2} 
\le \sum_{i=1}^n |s_i|^2 + \sum_{i=1}^n |s_i| 
\le m^2 + m = O(m^2)
\]

Each insertion corresponds to a vector stored in a pattern-specific HNSW graph. Since the HNSW construction bounds the number of neighbors per node by a fixed constant, the total space required for all graphs is $O(m^2)$.
\end{proof}

\subsection{PreFiltering and PostFiltering}
Algorithm \ref{alg:filtering} shows the query processes of \textit{PreFiltering} and \textit{PostFiltering}. Given the query vector $v_q$, query pattern $p$ and integer $k$, PreFiltering first identifies candidate vectors $V_p$ and then performs a brute-force search over this filtered subset to find $k$ nearest neighbors to $v_q$. This approach guarantees exact results but often incurs significantly higher query time due to the exhaustive search. PostFiltering, in contrast, performs a standard ANN search over the full HNSW index and subsequently filters the results to retain only those satisfying the pattern constraint. While PostFiltering achieves query efficiency comparable to HNSW-based methods such as the optimal baseline, it may sacrifice recall because some relevant vectors can be missed during the initial ANN search.
\begin{algorithm}[ht]
\small
\caption{Pre/Post-Filtering}
\label{alg:filtering}

\SetKwProg{Fn}{Function}{:}{}
\SetKwFunction{PreFiltering}{PreFiltering}
\SetKwFunction{PostFiltering}{PostFiltering}

\Fn{\PreFiltering{$v_q, p, k$}}{
    Identify $V_p$ from an index of $S$\;
    Return $k$ closest vectors in $V_p$\;
}

\BlankLine

\Fn{\PostFiltering{$v_q, p, k, ef\_search$}}{
    Search in the HNSW graph $H(p)$ with parameters $v_q,k,ef\_search$ and return $ef\_search$ results\;
    Filter and return $k$ closest vectors whose associated sequences contain $p$\;
}

\end{algorithm}

\subsection{Open challenges and opportunities}
Figure~\ref{fig:overview} summarizes the three baseline approaches.
PreFiltering first identifies all pattern-matched vectors and then performs vector search over the filtered subset. When the pattern selectivity is low, this leads to examining a large number of candidates and consequently low query efficiency. PostFiltering, on the other hand, performs ANN search before applying the pattern constraint. As a result, many retrieved candidates may not satisfy the pattern, which degrades recall or requires expanding the search space to compensate.

OptQuery achieves the best efficiency-recall trade-off by constructing vector indexes for pattern-matched subsets, ensuring that only relevant vectors are examined during search. However, this strategy incurs a quadratic space cost, as it potentially requires building indexes for $O(m^2)$ distinct patterns, making it impractical for large-scale datasets.

The key challenge is therefore how to retain the query performance of OptQuery while substantially reducing its index size. Our central insight is that although there can be $O(m^2)$ possible substrings in a sequence collection, many of them share identical occurrence behavior and can be grouped into $O(m)$ equivalence classes via an automaton structure. By constructing indexes over these equivalence classes rather than individual patterns, we can significantly reduce the index size while preserving efficient query processing.
\begin{figure}[t]
    \centering
    \includegraphics[width=1.\linewidth]{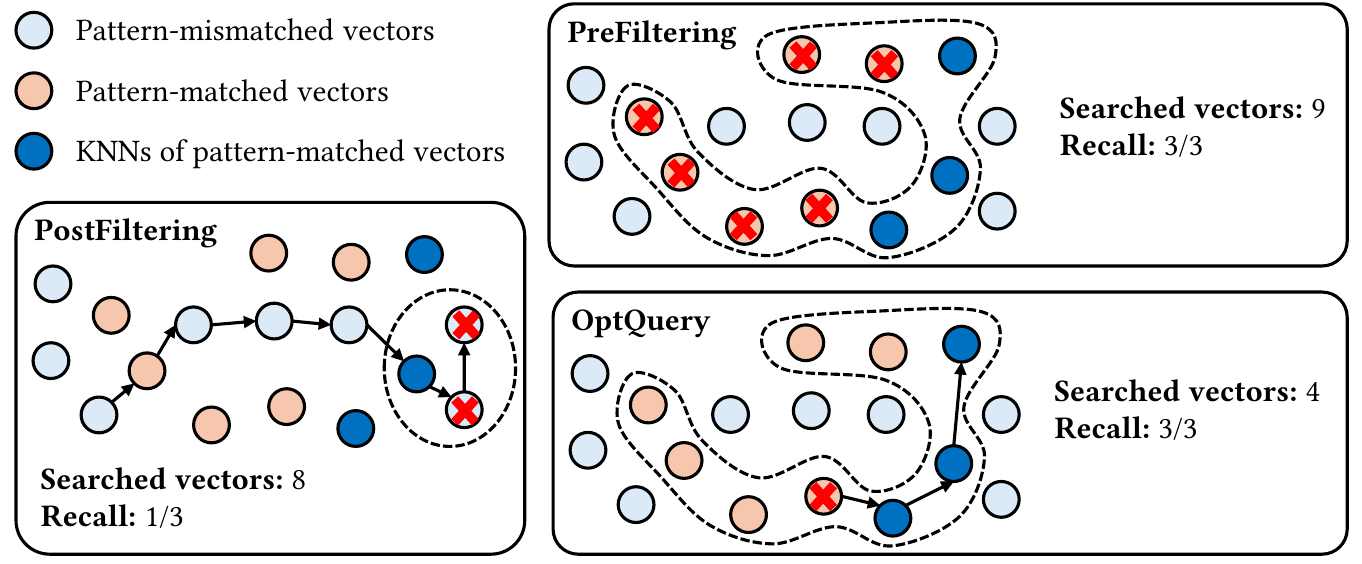}
    \caption{Comparison of different baselines.}
    \label{fig:overview}
\end{figure}

\section{VectorMaton}
\begin{figure*}[t]
    \centering
    \includegraphics[width=.9\linewidth]{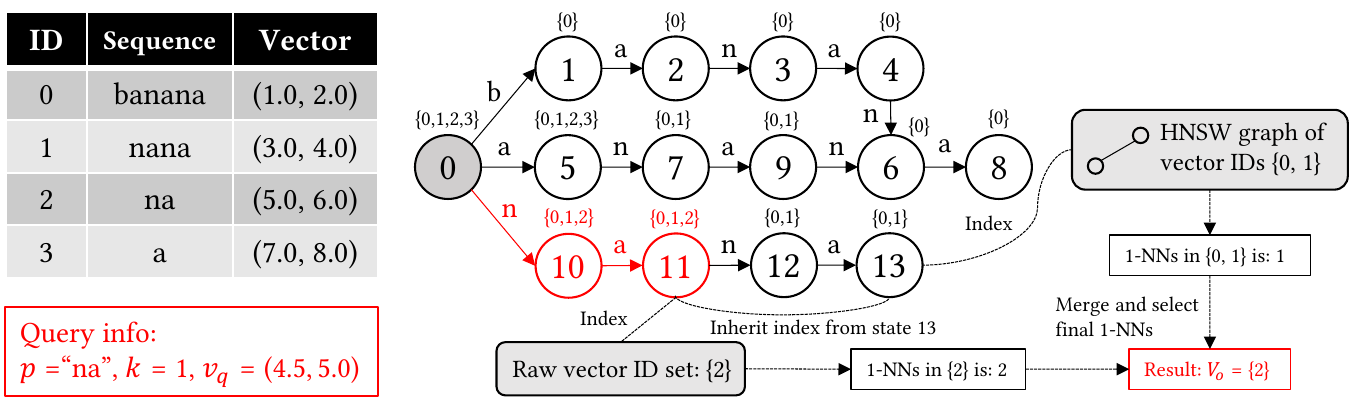}
    \caption{An example \textsc{VectorMaton} index and its query process.}
    \label{fig:vectormaton}
    \vspace{-4mm}
\end{figure*}
We now present \textsc{VectorMaton}, an index designed to support efficient approximate nearest neighbor search under pattern constraints. We first describe the overall index structure in Section~\ref{sec:index-structure}. As illustrated in Figure \ref{fig:vectormaton}, \textsc{VectorMaton} consists of an enhanced suffix automaton (ESAM) that indexes all substrings of the sequences, with each state augmented by either a HNSW graph or a raw vector ID set. To process a query pattern $p$, we traverse the automaton along edges labeled with the symbols of $p$ until reaching the corresponding state. From this state, we perform a $k$-nearest neighbor search over the vectors in its own index as well as the index inherited from one of its successor state. The results from these searches are then merged to produce the final top-$k$ nearest neighbors.

We then detail the query processing and index construction procedures in Sections~\ref{sec:query-processing} and ~\ref{sec:index-construction}, respectively. We show that \textsc{VectorMaton} achieves an index size substantially smaller than the optimal query baseline, while providing comparable query efficiency and result quality.

\subsection{Index structure}
\label{sec:index-structure}
Figure~\ref{fig:vectormaton} illustrates an overview of the \textsc{VectorMaton} index structure. We propose an enhanced suffix automaton, termed as ESAM, which consists of states and transitions. Each state is associated with a corresponding state index either as an HNSW graph or as a raw vector ID set.

\vspace{1mm}
\noindent\textbf{ESAM states.} The proposed \textsc{VectorMaton} index constructs an enhanced suffix automaton over all sequences in the collection. Recall that the general suffix automaton (GSA)~\cite{gsa} compactly represents all substrings of a sequence collection and supports efficient \emph{pattern existence} queries (see Section~\ref{sec:suffix-automaton}). However, GSA cannot be directly extended to determine \emph{which sequences} contain a given pattern. This limitation arises because a single automaton state may represent multiple substrings whose occurrences span different subsets of sequences (see Example \ref{example:gsa}).
\begin{figure}[t]
    \centering
    \includegraphics[width=.8\linewidth]{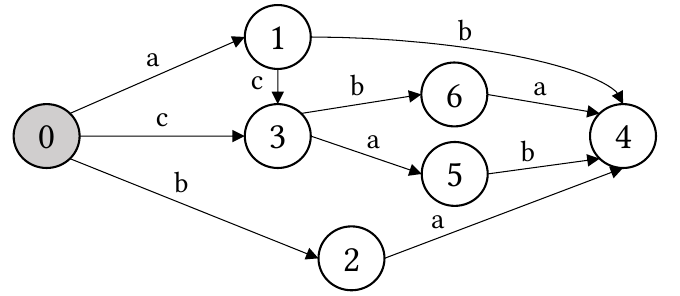}
    \caption{The general suffix automaton \cite{gsa} of sequences \texttt{ac}, \texttt{acab} and \texttt{acba}.}
    \label{fig:gsa}
\end{figure}
\begin{example}
    \label{example:gsa}
    Figure~\ref{fig:gsa} illustrates a GSA built over the sequences \texttt{ac}, \texttt{acab}, and \texttt{acba}. Consider two patterns, $p_1=\texttt{cba}$ and $p_2=\texttt{cab}$. Although the GSA correctly determines the existence of both patterns by ending in the same state (state~4), their occurrence sets differ: $p_1$ occurs only in sequence \texttt{acba}, whereas $p_2$ occurs only in sequence \texttt{acab}. Consequently, the GSA cannot distinguish which sequences contain a given pattern.
\end{example}

To support pattern-constrained queries, ESAM redefines the semantics of automaton states: each state corresponds to a set of \emph{equivalent patterns} and is associated with the set of vector IDs whose underlying sequences contain all patterns represented by that state. This refinement ensures the correctness of answering our problem. In the following, we formalize the definition of equivalence class.

\begin{definition}[Position list (poslist)]
Given a pattern $p$ that appears in the sequence collection, we define its \emph{position list}, denoted as $\mathit{poslist}(p)$, as the set of pairs $(\mathit{id}, \mathit{pos})$, where $\mathit{id}$ is the identifier of a sequence containing $p$, and $\mathit{pos}$ is the end position of an occurrence of $p$ in that sequence.
\end{definition}

\begin{example}
Consider the pattern $p =$ \texttt{anan}, which corresponds to state~6 in Figure~\ref{fig:vectormaton}. The pattern occurs only once, in the sequence \texttt{banana} with sequence ID~1, ending at position~4 of the sequence. Therefore, $\mathit{poslist}(p) = \left\{(1,4)\right\}$.
\end{example}

\begin{definition}[Equivalence class]
Two patterns $p_1$ and $p_2$ are considered equivalent if and only if $\mathit{poslist}(p_1) = \mathit{poslist}(p_2)$.
\end{definition}

\begin{definition}[Maximal pattern]
    Given a pattern $p$ that appears in the sequence collection, we say $p$ is maximal if and only if $p$ is longest within its equivalence class.
\end{definition}

\begin{example}
    In Figure \ref{fig:vectormaton}, state 6 corresponds to two equivalent patterns \texttt{anan}, \texttt{banan}, as their poslists are equal. The maximal pattern in state 6 is \texttt{banan}.
\end{example}

As a result, the patterns of each equivalence class corresponds to only one set of vector IDs, which is the union of IDs in its position list. Moreover, it is straightforward to observe that each equivalence class contains exactly one maximal pattern. In particular, if two maximal patterns have the same length and are equivalent, their occurrences in the sequences are thus identical, and extracting the pattern from their occurrences yields the same resulting pattern.

Since each state in ESAM represents a set of equivalent patterns, the number of states is equal to the number of maximal patterns in the sequence collection. Consequently, bounding the number of maximal patterns directly yields a bound on the number of states.

\begin{lemma}
\label{lemma:num-states}
Let $S$ be a collection of sequences with total length $m$. The number of maximal patterns in $S$ is bounded by $O(m)$. Therefore, the number of states in ESAM is also bounded by $O(m)$.
\end{lemma}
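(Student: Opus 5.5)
We bound the number of maximal patterns by building a tree on their position lists, ordered by inclusion. The equivalence classes here are endpos-equivalence classes over the multiset of occurrences in the collection. Concretely, the universe is the set $U=\{(id,pos)\}$ of all end positions, and $|U|=m$.

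\textbf{Step 1: laminarity.} For patterns $p_1,p_2$ with $|p_1|\le|p_2|$, we show that either $\mathit{poslist}(p_2)\subseteq\mathit{poslist}(p_1)$ or the two poslists are disjoint. Suppose some $(id,pos)$ lies in both. Then $p_1$ and $p_2$ both end at position $pos$ of $s_{id}$. Since $p_1$ is the shorter one, it is a suffix of $p_2$. Hence every occurrence of $p_2$ ends with an occurrence of $p_1$ at the same position, which gives the inclusion. The family of distinct poslists, one per equivalence class, is therefore a laminar family of nonempty subsets of $U$.

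\textbf{Step 2: counting the classes.} A laminar family of distinct nonempty subsets of an $m$-element set has at most $2m-1$ members. We prove this by induction: take an inclusion-maximal set, recurse on its children, and note that each internal node with a single child must be a strict superset. Alternatively, we argue via the tree directly. Consider the tree whose nodes are the classes and whose parent relation is immediate inclusion. A class whose poslist equals the disjoint union of its children's poslists must have at least two children. The reason is that it is a strict superset of each child and the children are disjoint.

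There is one subtlety. A class may have exactly one child and no "own" positions, i.e.\ poslist equal to the child's poslist. This cannot happen, since distinct classes have distinct poslists by definition. So each node either owns a position of $U$ not covered by its children, or has at least two children. Nodes owning a private position number at most $m$, because the private positions are disjoint. Standard tree counting then shows that nodes with $\ge2$ children number at most the count of the other nodes minus one. Together this gives at most $2m-1$ classes, plus one for the empty pattern if it is counted.

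\textbf{Step 3: conclusion.} Each equivalence class contains exactly one maximal pattern, as observed before the lemma. Hence the number of maximal patterns equals the number of classes, which is $O(m)$. The number of ESAM states equals the number of maximal patterns, so it is also $O(m)$.

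The main obstacle is Step 2, specifically arguing cleanly that a class cannot have a single child with an identical poslist. I plan to handle it by the private-position-or-branching dichotomy above. I would double-check that the root (empty pattern), if included, is handled correctly. Nothing deeper is needed; the argument mirrors the classical $2|s|-1$ bound for suffix automaton states, applied to the disjoint union of all sequences.
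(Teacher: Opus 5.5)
Your proof is correct, but it takes a different route from the paper's.

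\textbf{The paper's route.} The paper works one end position at a time. For each pair $(id,pos)$ it walks the suffixes $s_{id}[i:pos]$ and notes that each change of equivalence class along this chain is witnessed by an occurrence of the shorter suffix at some other pair. It then asserts, by a charging argument, that each pair contributes net at most one maximal pattern, for a total of at most $m$.

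\textbf{Your route.} You show that the poslists form a laminar family over the $m$-element universe of end positions. You then count the nodes of the inclusion forest using the private-position-or-branching dichotomy. This is the classical endpos-tree argument for suffix automata.

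\textbf{Comparison.} Your argument is rigorous and gives the correct constant $2m-1$, plus one for the root. It also exposes exactly the tree that the suffix links in the appendix traverse.

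The paper's argument is shorter, but its double counting is never made precise. Its explicit bound of $m$ is in fact false. For the single sequence \texttt{abb} ($m=3$), the equivalence classes are $\{\texttt{a}\}$, $\{\texttt{b}\}$, $\{\texttt{ab}\}$ and $\{\texttt{bb},\texttt{abb}\}$. That gives four maximal patterns: \texttt{a}, \texttt{b}, \texttt{ab}, \texttt{abb}. The lemma's $O(m)$ claim still stands, but it is your laminar-family argument, not the per-position charging, that actually establishes it.

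\textbf{One small polish.} State the tree count explicitly. Every leaf owns a private position. Every node without a private position has at least two children. Branching nodes number at most the number of leaves minus one. Since private positions are disjoint across nodes, the total of $2m-1$ follows immediately.
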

\begin{proof}
Consider a maximal pattern whose position list contains a pair $(id,pos)$. Such a pattern must be a substring of the form $s_{id}[i:pos]$, representing the substring from position $i$ to $pos$ in sequence $s_{id}$. For $0 \le i < pos$, suppose that $s_{id}[i:pos]$ and $s_{id}[i+1:pos]$ are not equivalent. Then, by the definition of equivalence classes, $s_{id}[i+1:pos]$ must occur in some other pair $(id', pos')$, while $s_{id}[i:pos]$ does not. In other words, for $x$ maximal patterns whose position lists contain $(id, pos)$, there must exist at least $x-1$ occurrences of these patterns associated with other $(id', pos')$ pairs.

Now, to count the total number of distinct maximal patterns, we sum over all $(id, pos)$ pairs. For each pair, the contribution to the total number of distinct maximal patterns is at most $x_{id,pos}-(x_{id,pos}-1)=1$. Therefore, summing over all such pairs, the total number of distinct maximal patterns is at most $\sum_{(id,pos)}1=m$, where $m$ is the total length of all sequences.
\end{proof}

The linear bound on the number of states is a fundamental property of suffix automata. Lemma~\ref{lemma:num-states} demonstrates that our ESAM preserves this property, avoiding the potential for substantial space blowup. In the following, we further establish an upper bound on the total size of the ID sets associated with all states.

\begin{lemma}
\label{lemma:num-id-set}
The total size of all associated ID sets in ESAM is bounded by $O(m^{1.5})$, where $m$ is the total length of all sequences.
\end{lemma}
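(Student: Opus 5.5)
We bound the sum of $|\mathrm{ID}(u)|$ over all states $u$, where $\mathrm{ID}(u)$ is the set of sequence IDs appearing in the position list of the patterns of $u$. We will do this with a heavy/light split on the sequences, using threshold $\sqrt{m}$.

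\textbf{Short sequences.} Call a sequence \emph{short} if $|s_i|\le\sqrt m$ and \emph{long} otherwise. Rather than counting per state, charge each pair $(u,i)$ with $i\in\mathrm{ID}(u)$ to the sequence $s_i$. If $i\in\mathrm{ID}(u)$, then some pattern of $u$ is a substring of $s_i$, and distinct states have disjoint pattern sets. Hence the number of states charged to $s_i$ is at most the number of distinct substrings of $s_i$, which is at most $|s_i|(|s_i|+1)/2$. Summing over short sequences gives $\sum_{|s_i|\le\sqrt m}|s_i|^2\le\sqrt m\sum_i|s_i|\le m^{1.5}$.

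\textbf{Long sequences.} Since the total length is $m$, there are fewer than $m/\sqrt m=\sqrt m$ long sequences. Each state therefore contains at most $\sqrt m$ long IDs. By Lemma~\ref{lemma:num-states} there are $O(m)$ states, so the long contribution is $O(m)\cdot\sqrt m=O(m^{1.5})$.

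\textbf{Conclusion and main obstacle.} Adding the two parts gives $O(m^{1.5})$. The step that needs care is the short-sequence charging. It relies on two facts: each state stores a single well-defined ID set (the union of IDs in its common poslist), and a state is charged to $s_i$ only through a substring of $s_i$ belonging to that state's class. Since equivalence classes partition the set of patterns, the injection from states charged to $s_i$ into distinct substrings of $s_i$ holds. I would state this explicitly, then finish by choosing the threshold $\sqrt m$ to balance the two terms. That choice is where the exponent $1.5$ comes from.
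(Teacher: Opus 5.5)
Your proof is correct and is essentially the paper's argument. The paper also combines the per-sequence substring count with the $O(m)$ state bound of Lemma~\ref{lemma:num-states}: it bounds the number of states containing ID $i$ by $\min\{|s_i|^2/2,\,cm\}$ and splits at $|s_i|=\sqrt{2cm}$, which is your heavy/light split at $\Theta(\sqrt m)$. The only bookkeeping difference is the long case, where the paper bounds each long sequence's $cm$ by its own length ($cm<|s_i|\sqrt{2cm}/2$) instead of counting fewer than $\sqrt m$ long sequences; your substring count $|s_i|(|s_i|+1)/2$ is also slightly more careful than the paper's.
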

\begin{proof}
Consider a sequence $s$ of length $|s|$. It has at most $\frac{|s|(|s|+1)}{2}$ distinct substrings, and thus can contribute its sequence ID to at most that many maximal patterns. By Lemma~\ref{lemma:num-states}, the total number of states in \textsc{VectorMaton} is bounded by $O(m)$; denote this bound by $cm$ for some constant $c$. Therefore, the ID $i$ of sequence $s_i$ can appear in at most: $\min\left\{\frac{|s_i|(|s_i|-1)}{2},cm\right\}$ states. Let $S$ denote the sequence collection. The total size of all associated ID sets is thus bounded by $\sum_{i=1}^n\min\left\{\frac{|s_i|(|s_i|-1)}{2},cm\right\}$. Next, we denote $t_i=\min\left\{\frac{|s_i|^2}{2},cm\right\}\geq\min\left\{\frac{|s_i|(|s_i|-1)}{2},cm\right\}$ and consider two cases.
\begin{itemize}[leftmargin=*]
    \item If $\frac{|s_i|^2}{2}\leq cm$, then $|s_i|\leq\sqrt{2cm}$ and $t_i=\frac{|s_i|^2}{2}\leq |s_i|\cdot\frac{\sqrt{2cm}}{2}$.
    \item If $\frac{|s_i|^2}{2}>cm$, then $\sqrt{m}<\frac{|s_i|\sqrt{2c}}{2c}$ and $t_i=cm<|s_i|\cdot\frac{\sqrt{2cm}}{2}$.
\end{itemize}

Therefore, the total size of all associated ID sets is bounded by:

$$\sum_{i=1}^nt_i\leq \sum_{i=1}^n|s_i|\frac{\sqrt{2cm}}{2}=\frac{\sqrt{2c}}{2}m^{1.5}=O(m^{1.5})$$
\end{proof}

\noindent\textbf{ESAM transitions.}
As we have defined the states of ESAM, we next describe how transitions between states are established. As discussed in Section~\ref{sec:suffix-automaton}, transitions in classical suffix automata correspond to extending a pattern by one character. This principle naturally generalizes to our multi-sequence setting. Specifically, the following lemma formalizes the correctness of the resulting transition construction. \textbf{For space limits, we defer all remaining proofs to the appendix of our extended version \cite{extended}.}

\begin{lemma}
\label{lemma:transition}
Let $A$ and $B$ be two equivalence classes, corresponding to states $A$ and $B$. If there exist patterns $p_A \in A$ and $p_B \in B$ such that $p_B = p_A \cdot c$ for some character $c$, then for every other pattern $p_A' \in A$, the extended pattern $p_A' \cdot c$ also belongs to $B$. Consequently, a transition labeled $c$ can be established from state $A$ to state $B$.
\end{lemma}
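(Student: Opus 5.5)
The plan is to show directly that the position list of $p_A'\cdot c$ equals that of $p_A\cdot c$. Then, by the definition of equivalence class, $p_A'\cdot c$ lies in the class of $p_A\cdot c=p_B$, namely $B$. The tool is a description of $\mathit{poslist}(q\cdot c)$ in terms of $\mathit{poslist}(q)$ alone.

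\textbf{Step 1 (extension identity).} For any pattern $q$ and character $c$, I will show
\[
\mathit{poslist}(q\cdot c)=\{(id,pos+1) : (id,pos)\in\mathit{poslist}(q),\ pos<|s_{id}|,\ s_{id}[pos+1]=c\}.
\]
For the inclusion $\subseteq$: if $q\cdot c$ ends at position $pos+1$ of $s_{id}$, then $q$ ends at position $pos$ and the next character is $c$. For $\supseteq$: if $q$ ends at $pos$ and $s_{id}[pos+1]=c$, then $q\cdot c$ ends at $pos+1$. The only care needed is consistency with the paper's indexing convention for end positions. The argument just says ``the character immediately following the occurrence,'' so the convention does not matter.

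\textbf{Step 2 (apply to the class).} The right-hand side of the identity depends on $q$ only through $\mathit{poslist}(q)$ and the fixed sequences. Since $p_A,p_A'\in A$, we have $\mathit{poslist}(p_A)=\mathit{poslist}(p_A')$, hence $\mathit{poslist}(p_A'\cdot c)=\mathit{poslist}(p_A\cdot c)=\mathit{poslist}(p_B)$. Because $p_B$ appears in the collection, this set is nonempty, so $p_A'\cdot c$ also appears in the collection and is a genuine pattern. By the definition of equivalence class, $p_A'\cdot c\in B$.

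\textbf{Step 3 (transition is well defined).} Every pattern of $A$ extended by $c$ lands in the same class $B$, so the target of the $c$-transition out of $A$ does not depend on the representative chosen. A single transition $A\xrightarrow{c}B$ is therefore well defined, and following it from any pattern in $A$ tracks the correct equivalence class. This gives the ``consequently'' part of the lemma.

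I do not expect a real obstacle. The only point needing care is Step 1: stating precisely that occurrences of $q\cdot c$ correspond bijectively to occurrences of $q$ followed by $c$, including boundary cases where $q$ ends at the last position of a sequence and cannot be extended. Those occurrences are simply dropped from both sides, so they cause no issue.
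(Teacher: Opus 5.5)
Your proposal is correct and follows essentially the same route as the paper's proof in the appendix: both write $\mathit{poslist}(p_A\cdot c)$ as the shifted subset of $\mathit{poslist}(p_A)$ whose next character is $c$, then use $\mathit{poslist}(p_A')=\mathit{poslist}(p_A)$ to conclude that $p_A'\cdot c$ has the same position list as $p_B$. Your extra remark that this list is nonempty, so $p_A'\cdot c$ actually occurs in the collection, is a small detail the paper leaves implicit.
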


Since each transition strictly increases all position values $pos$ by one, the states and transitions in ESAM form a directed acyclic graph (DAG). Moreover, for any transition $i \to j$, the associated vector ID sets satisfy $V_j \subseteq V_i$. This monotonicity property enables effective reuse of index information between a state and its descendant states.

\vspace{1mm}
\noindent\textbf{State index.}
While each state in ESAM is associated with a vector ID set, as illustrated in Figure~\ref{fig:vectormaton}, these sets exhibit substantial overlap across states. Moreover, different states correspond to patterns with varying selectivity. Consequently, constructing an independent HNSW index for every state would be highly space-inefficient. To address this issue, we propose two space-optimization strategies: \textbf{(1) an index reuse strategy} and \textbf{(2) a skip-build strategy}.

For index reuse strategy, let $V_j$ be the associated ID set over state $j$, and $I_j$ be the ID set to index associated with state $j$ that contains $|I_j|$ vectors. Among all descendants of $j$ (i.e., states reachable from $j$ via a directed path), let $k$ be the descendant with the largest indexed ID set. We call state $k$ the \emph{inherited state} of state $j$. Instead of building $I_j$ over the entire set $V_j$, we construct $I_j$ only over the difference set $V_j \setminus I_k$. During query processing on state $j$, we independently retrieve the $k$ nearest neighbors from both $I_j$ and $I_k$, and then merge the two result sets. This reuse strategy achieves the comparable query accuracy as an index built directly over $V_j$ while reduces the total space consumption.

For skip-build strategy, as states may have widely varying ID set sizes due to differences in pattern selectivity, constructing a full HNSW graph for every state is often unnecessary. Under the skip-build strategy, when the size of a state’s ID set is below a threshold $T$, we store only the raw vector ID set and perform brute-force search when querying. This approach is generally more efficient than graph-based search for very small datasets. For example, when the search parameter $ef\_search$ exceeds the graph size, HNSW search may degenerate into an exhaustive traversal of all vertices, which is typically slower than direct brute-force evaluation.

\begin{figure*}[t]
    \centering
    \includegraphics[width=1.\linewidth]{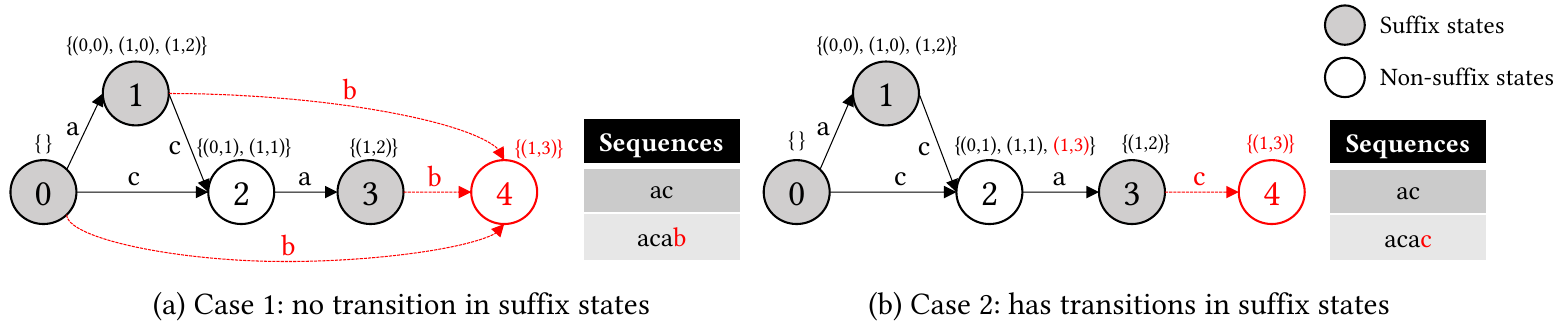}
    \caption{An example of suffix states and their extension process.}
    \label{fig:suffix-states}
    \vspace{-4mm}
\end{figure*}

\subsection{Query processing}
\label{sec:query-processing}
To answer a query, we first leverage the ESAM to locate the state corresponding to the query pattern, which is associated with the filtered vector set. We then perform nearest neighbor search over the index associated with this state and, if applicable, its inherited index.

Algorithm~\ref{alg:vectormaton} presents the details. The algorithm begins at the initial state $0$. It sequentially scans the query pattern and follows the corresponding transitions in the automaton until either the entire pattern is consumed or no valid transition exists (lines 24-26). Let $cur$ denote the resulting state. The algorithm then retrieves the index associated with state $cur$ as well as the index of its inherited state, if such a state exists (lines 27-28). Finally, it independently performs $k$-nearest neighbor searches on the two indexes and merges the results to obtain the final answer set (lines 29-30).

As \textsc{VectorMaton} adopts a skip-build strategy that adaptively constructs each index either as an HNSW graph or as a raw vector ID set, the query processing is different on the two types of index. If an index is implemented as an HNSW graph, we apply the standard HNSW search procedure with parameter $ef\_search$. Otherwise, we perform a brute-force scan over all associated vectors and select the top-$k$ nearest neighbors.

\begin{algorithm}[ht]
\small
\caption{\textsc{VectorMaton}}
\label{alg:vectormaton}

\SetKwProg{Fn}{Function}{:}{}
\SetKwFunction{Build}{Build}
\SetKwFunction{Query}{Query}

\Fn{\Build{$V,S,T$}}{
  \tcp{Build automaton via suffix-link extension}
  Initialize automaton with root $0$\;
  \ForEach{sequence $s_i \in S$}{
    $last \gets 0$\;
    \ForEach{symbol $c$ in $s_i$}{
      Create new state $cur$\;
      Follow suffix links from $last$ and:
        (i) connect missing $c$-transitions to $cur$;
        (ii) stop at first state $A$ with $c$-transition\;
      \lIf{no such state exists}{
        $\mathrm{link}(cur) \gets 0$
      }\Else{
        $B\gets$ the $c$-successor of $A$\;
        \lIf{$|B|=|A|+1$}{
          $\mathrm{link}(cur) \gets B$
        }\Else{
          Clone $B$ as $B'$ and redirect transitions\;
          $\mathrm{link}(cur) \gets B'$\;
        }
      }
      $last \gets cur$\;
      \tcp{ID propagation} 
      Follow suffix link from $last$ and (1) add ID $i$ to their ID set; (2) stop at first state that already contains $i$\;
    }
  }

  \tcp{Build state indexes}
  \ForEach{state $u$ in reverse topological order}{
    $v\gets$ descendant with largest indexed set\;
    $base(u)\gets ID(u)$ excluding the indexed set of $v$\; 
    \lIf{$|base(u)|<T$}{
        Store $I(u)\gets$ raw ID set of $base(u)$
    }\lElse{
        Build index $I(u)\gets$ HNSW over $base(u)$
    }
  }
}
\BlankLine

\Fn{\Query{$v_q, p, k, ef\_search$}}{
    $cur\gets$ the state 0\;
    
    \For{$i\gets 0, 1,\cdots,|p|-1$}{
        \lIf{no transition $p_i$ in $cur$}{
            \textbf{return} $\emptyset$
        }
        $cur\gets$ next state of transition $p_i$\;
    }
    $next\gets$ the state that $cur$ inherits index from\;
    $I(cur),I(next)\gets$ the index of $cur,next$\;
    $res_1,res_2\gets$ the $k$-NN results of $v_q$ from $I(cur),I(next)$\;
    \textbf{return} the top-$k$ closest vector IDs in $res_1\cup res_2$\;
}

\end{algorithm}

As the indexes of a state and its inherited state together form an exact cover of the state’s associated vector set, merging the top-$k$ results returned from these two indexes is \textbf{lossless}. Consequently, the resulting recall is comparable to that of OptQuery, which queries over the entire satisfying vector set $V_p$.
\begin{lemma}
    \label{lemma:exact-cover}
    Let $V_j$ be the associated ID set over state $j$, and $I_j$ be the ID set of the index (HNSW or raw ID set) of state $j$. Then for each state $j$ and its inherited state $k$, the sets $I_j,I_k$ form an exact cover of $V_j$, i.e., $I_j\cup I_k=V_j$ and $I_j\cap I_k=\emptyset$.
\end{lemma}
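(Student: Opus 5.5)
The proof rests on the definitions in the index-reuse strategy: $I_j$ is built over $V_j\setminus I_k$, where $k$ is the inherited state of $j$, i.e., the descendant of $j$ whose indexed set is largest. Disjointness is immediate from this: $I_j\subseteq V_j\setminus I_k$, so $I_j\cap I_k=\emptyset$. The content of the lemma is therefore the union $I_j\cup I_k=V_j$. Given $I_j=V_j\setminus I_k$, this reduces to showing $I_k\subseteq V_j$.

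I will prove $I_k\subseteq V_j$ in two steps. First, $I_k\subseteq V_k$. Every indexed set is constructed as a subset of its own state's ID set, since $base(u)=ID(u)\setminus I(v)\subseteq ID(u)$. Second, $V_k\subseteq V_j$. Here I use the monotonicity property stated after Lemma~\ref{lemma:transition}: for any transition $i\to j$, $V_j\subseteq V_i$. Since $k$ is a descendant of $j$, there is a directed path from $j$ to $k$ in the ESAM DAG. Inducting along this path and using transitivity of $\subseteq$ gives $V_k\subseteq V_j$. Combining the two steps, $I_k\subseteq V_k\subseteq V_j$, hence $I_j\cup I_k=(V_j\setminus I_k)\cup I_k=V_j$.

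If the monotonicity property needs justification, I will argue it as follows. By Lemma~\ref{lemma:transition}, every pattern in the target state has the form $p\cdot c$ with $p$ in the source state. Any sequence containing $p\cdot c$ also contains $p$. Since each state's ID set is the set of sequences containing its patterns (well defined by the equivalence-class definition), the target's ID set is contained in the source's.

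I expect the main obstacle to be well-definedness rather than the set algebra. Indexes are built in reverse topological order (Algorithm~\ref{alg:vectormaton}), so $I_k$ is fixed before $I_j$ is constructed, and the choice of $k$ is unambiguous with ties broken arbitrarily. The remaining edge case is a state $j$ with no descendants. There I will interpret $I_k=\emptyset$, so that $I_j=V_j$ and the exact cover holds trivially.
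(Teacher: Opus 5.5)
Your proof is correct and follows the paper's route: the paper reads both identities directly off the construction $I_j = V_j\setminus I_k$, and your check that $I_k\subseteq V_k\subseteq V_j$ (via monotonicity along the path from $j$ to its descendant $k$) supplies the step the paper leaves implicit for $I_j\cup I_k=V_j$. One small correction to your backup justification of monotonicity: not every pattern of the target state need extend a pattern of that particular source state, since a state can be entered by $c$-transitions from several states (e.g.\ $\{\texttt{an},\texttt{n}\}$ in the automaton of \texttt{banana} is entered from both the initial state and $\{\texttt{a}\}$), but a single pair $p_A\in A$, $p_A\cdot c\in B$ suffices because each state's ID set is determined by its shared position list.
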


\subsection{Index construction}
\label{sec:index-construction}
In this section, we introduce how to construct the ESAM and state index, respectively. Additionally, we provide a parallel construction approach that accelerates the index construction process.

\vspace{1mm}
\noindent\textbf{ESAM construction.} For ESAM construction, we propose a generalized variant of the SAM construction described in Section~\ref{sec:suffix-automaton}. We process the sequence collection incrementally, and for each sequence, we process its symbols in order. Next, we aim to identify new states and transitions upon processing each symbol.

The key question of is \textbf{which old states may give rise to new states}. Since symbols are processed incrementally, each newly read symbol extends all suffixes of the currently processed sequence by one character, potentially forming new equivalence classes. Consequently, the construction must maintain the states that correspond to suffixes of the current sequence.

\begin{definition}[Suffix state]
Consider the sequence $s_i$ currently being processed and suppose the algorithm has processed position $j$. Upon reading the next symbol (at p
osition $j+1$), a state (i.e., an equivalence class) is called a \emph{suffix state} if and only if (1) its position list is empty, or (2) its position list contains $(i,j)$.
\end{definition}
\begin{example}
    Figure \ref{fig:suffix-states}(a) shows an example of suffix states. Before extending  symbol $b$, states 0, 1, 3 are suffix states as their position list is either empty or contains $(1,2)$.
\end{example}

The next question is \textbf{how to extend these suffix states to new states}. By Lemma~\ref{lemma:transition}, for a suffix state $A$ with position list $poslist(A)=\left\{(x_k,y_k)\right\}$, extending $A$ by the next symbol $c=s_i[j+1]$ yields a state whose position list is $\left\{(x_k,y_k+1)|s_{x_k}[y_k+1]=c\right\}$. Consequently, if a suffix state $A$ has no outgoing transition labeled $c$ prior to the extension, none of its existing occurrences can be extended by $c$, and the extension introduces only the new occurrence ending at $(i,j+1)$. Therefore, all such suffix states point to the same newly created state whose position list is exactly $\left\{(i,j+1)\right\}$.

\begin{lemma}
\label{lemma:case1}
Let $c=s_i[j+1]$ be the next symbol to be processed. For all suffix states $A$ that do not have an outgoing transition labeled $c$, extending $A$ by $c$ leads to the same state.
\end{lemma}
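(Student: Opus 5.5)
The plan is to compute the position list of the extended patterns directly and show that it is the same singleton $\{(i,j+1)\}$ for every suffix state without a $c$-transition. Since states are equivalence classes determined by position lists, this will show that all such extensions land in a single state.

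\emph{Step 1: no transition means no extendable occurrence.} Let $A$ be a suffix state with no outgoing transition labeled $c$ before processing $s_i[j+1]$. We work in the automaton built over the sequences $s_1,\dots,s_{i-1}$ and the prefix $s_i[1..j]$. By Lemma~\ref{lemma:transition}, a transition labeled $c$ exists from $A$ as soon as some pattern $p_A\in A$ has $p_A\cdot c$ occurring in this processed text. If $A$ has no such transition, then for every $(x,y)\in poslist(A)$ recorded so far, the next symbol $s_x[y+1]$ either is not $c$ or has not yet been processed. The only occurrence whose successor is unprocessed is $(i,j)$, since the prefix of $s_i$ ends at $j$ and earlier sequences are complete. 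Hence no previously recorded occurrence of any pattern in $A$ extends by $c$.

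\emph{Step 2: the new occurrence extends.} Since $A$ is a suffix state, either $(i,j)\in poslist(A)$, in which case every $p\in A$ ends at position $j$ of $s_i$, or $A$ is the initial state representing the empty pattern. In both cases $p\cdot c$ occurs ending at $(i,j+1)$, because $s_i[j+1]=c$. Using the extension formula stated before the lemma, $poslist(p\cdot c)=\{(x,y+1)\mid s_x[y+1]=c\}$ over occurrences of $p$. Combined with Step 1, this gives $poslist(p\cdot c)=\{(i,j+1)\}$ for every $p\in A$.

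\emph{Step 3: conclude.} Every such extension, from any qualifying suffix state $A$ and any pattern in $A$, has poslist exactly $\{(i,j+1)\}$. By the definition of equivalence class they therefore all belong to one class, i.e., one state.

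The main obstacle is Step 1. It requires care about which text is ``already processed'': poslists must be taken relative to the current prefix of the collection, and one must argue that a $c$-transition would have been created whenever any occurrence had a following $c$. I would handle this by invoking the invariant that the automaton correctly represents all substrings of the processed text, which is the correctness guarantee of the construction, together with Lemma~\ref{lemma:transition}. The empty-poslist case of the initial state also needs a separate line, treating the empty pattern as occurring at every position.
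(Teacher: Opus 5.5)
Your proof is correct and follows essentially the same route as the paper's: with no $c$-transition from $A$, no occurrence in $poslist(A)$ other than the current one at $(i,j)$ extends by $c$, while $(i,j)$ extends to $(i,j+1)$, so every such extension has position list $\{(i,j+1)\}$ and therefore lies in a single equivalence class. Your added care is a welcome tightening of points the paper leaves implicit: you take position lists relative to the processed prefix, justify Step 1 via the construction's invariant that every substring of the processed text has a path, and treat the initial state separately.
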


\begin{example}
    Figure \ref{fig:suffix-states}(a) shows an example when suffix states have no corresponding transition `b'. We simply create a new state and connect these suffix states to the new state with position list $\left\{(1,3)\right\}$.
\end{example}

Conversely, suppose a suffix state $A$ already has an outgoing transition labeled $c$ prior to the extension, and let $B$ denote its target state. We examine whether all occurrences represented by $B$ remain valid after introducing the new occurrence $(i,j+1)$. If adding $(i,j+1)$ preserves the equivalence class represented by $B$, then no structural change is required: $B$ simply augments its position list with $(i,j+1)$ and continues to represent the updated class. Otherwise, the extension violates the equivalence of occurrences in $B$. In this case, we split the class by creating a copy $B'$. The original state $B$ retains the occurrences that cannot be extended to $(i,j+1)$, while $B'$ represents the refined equivalence class that includes $(i,j+1)$, and then replace connection $A\to B$ by $A\to B'$.

\begin{example}
Figure~\ref{fig:suffix-states}(b) illustrates the case where suffix states 0 and 1 both have an outgoing transition labeled `c'. Since all occurrences represented by state 2 remain consistent after adding the new occurrence $(1,3)$, no splitting is required and state 2 simply updates its position list.
\end{example}

However, directly scanning all suffix states would be inefficient. Thus we utilize the \emph{suffix link} mechanism~\cite{automaton} to provides an efficient way to process them. While the concept is primarily designed for single-sequence scenario (see Section \ref{sec:suffix-automaton}), it can be naturally generalized to our multi-sequence automaton. For space reasons, the detailed correctness proof is deferred to the appendix of the extended version \cite{extended}. In the following, we outline the automaton construction process via suffix links. The detailed procedure is illustrated in Algorithm \ref{alg:vectormaton}.

Specifically, the suffix states form a chain ordered by decreasing maximal pattern length via suffix links, which splits into at most two contiguous parts: states without an outgoing transition labeled $c$, and states with such a transition. For states without transition $c$, we create a new state and connect all of them to it via transitions labeled $c$ (lines 6-7). For states with transition $c$, let $A$ be the first such state and let $B$ be its $c$-successor (line 10). If the maximal pattern length of $B$ equals that of $A$ plus one, then $B$ already represents the correct equivalence class after extension and no structural change is required (line 11). Otherwise, we create a copy $B'$ of $B$ so that $B'$ represents position list extendable by the new symbol (i.e., $poslist(B')=poslist(B)\cup\left\{(i,j+1)\right\}$), while $B$ preserves the original position list (lines 13-14). We then update the remaining suffix states on the chain by redirecting their $c$-transitions from $B$ to $B'$. Consequently, each extension step introduces at most two new states and updates transitions only along the suffix-link chain, yielding amortized constant processing time. The suffix links themselves can also be maintained in constant time per symbol. After processing a sequence, we need to reset suffix state to 0 before processing the next sequence. This guarantees that only suffixes of the current sequence are extended, hence no pattern spanning multiple sequences is ever constructed.

\vspace{1mm}
\noindent\textbf{Vector ID propagation.}
Vector IDs can be propagated in an online manner during ESAM construction. Since each extension step processes all suffix states along the suffix-link chain, these states correspond exactly to patterns that end at the current position and should therefore include the current sequence ID. Thus, we propagate the vector ID by traversing the suffix-link chain starting from the newly created state and adding the ID until reaching a state that already contains it. This procedure is shown in line~16 of Algorithm~\ref{alg:vectormaton}.

\vspace{1mm}
\noindent\textbf{State index construction.} To address the significant overlap among ID sets and the presence of small ID sets, we propose two strategies: an \textbf{index reuse strategy} and a \textbf{skip-build strategy}. Specifically, we construct state indexes in reverse topological order of the automaton. In this order, when processing a state $u$, all its descendant states have already been processed. We identify the descendant of $u$ with the largest indexed ID set by examining its outgoing neighbors. We then define the \emph{base set} of $u$ as the difference set between $ID(u)$ and the largest indexed set of its descendants. If $|base(u)| < T$, we directly store $base(u)$ as a raw ID set. Otherwise, we build an HNSW index over the vectors corresponding to $base(u)$. This process ensures effective index reuse while avoiding unnecessary graph construction for small sets. The detailed procedure is shown in lines 17–21 of Algorithm~\ref{alg:vectormaton}.

We remark that all vectors are stored in a global array. Each HNSW graph maintains only the IDs of vectors rather than storing local copies of the vectors themselves. Therefore, the total space usage of \textsc{VectorMaton} is dominated by the storage of ID sets and ESAM structures, and is linear in the total size of all ID sets.

\begin{theorem}
\label{theorem:space-complexity}
Let $m$ denote the total length of all sequences. With a constant size of symbol set, the overall space complexity of \textsc{VectorMaton} is bounded by $O(m^{1.5})$.
\end{theorem}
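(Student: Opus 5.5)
The plan is to split the space of \textsc{VectorMaton} into two parts, the ESAM skeleton (states, transitions, suffix links) and the state indexes (raw ID sets or HNSW graphs over base sets), and to bound each part separately.

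\textbf{The skeleton.} By Lemma~\ref{lemma:num-states} there are $O(m)$ states. Each state stores one suffix link and at most $|\Sigma|$ outgoing transitions. With a constant-size symbol set, this gives $O(m)$ transitions in total, so the skeleton occupies $O(m)$ space. The global vector array is counted as input data, or contributes only $O(n)\subseteq O(m)$ if we assume every sequence is nonempty.

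\textbf{The state indexes.} For each state $u$, the index $I(u)$ is built over $base(u)\subseteq ID(u)$, since $base(u)$ is $ID(u)$ with the inherited indexed set removed. If $|base(u)|<T$, the index is a raw ID list of size $|base(u)|$. Otherwise it is an HNSW graph over $|base(u)|$ IDs. As noted in the HNSW preliminaries, each node has at most a constant $M$ neighbors per layer, and the expected number of layers per node is constant, so the graph uses $O(|base(u)|)$ space. Each state also records a pointer to its inherited state, which is $O(1)$ per state.

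\textbf{Summing.} The total index space is therefore $O\bigl(\sum_u |base(u)|\bigr)\le O\bigl(\sum_u |ID(u)|\bigr)$. By Lemma~\ref{lemma:num-id-set}, the right-hand side is $O(m^{1.5})$. Adding the $O(m)$ skeleton gives the claimed overall bound of $O(m^{1.5})$.

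The main subtlety is the HNSW space accounting. The linear-in-size claim is really about the bottom layer (degree at most $2M$) plus upper layers whose sizes decay geometrically. I would argue that the total edge count is $O(M\cdot|base(u)|)$ in expectation, treating $M$ as a constant as the paper does. The second point to make explicit is that ID sets are stored only through the indexes, not also as separate copies of $ID(u)$. If full ID sets were kept alongside the indexes, Lemma~\ref{lemma:num-id-set} would still cover them, so the bound holds either way.
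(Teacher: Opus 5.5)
Your proposal is correct and follows essentially the same route as the paper: $O(m)$ states by Lemma~\ref{lemma:num-states}, hence $O(m)$ transitions and suffix links with a constant-size alphabet, plus $O(m^{1.5})$ for the ID sets and their indexes by Lemma~\ref{lemma:num-id-set}. Your explicit observations that $base(u)\subseteq ID(u)$ and that each HNSW graph or raw list is linear (in expectation) in its base set spell out what the paper compresses into ``dominated by the storage of ID sets and associated index structures.''
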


\noindent\textbf{Parallel construction.} The primary bottleneck in index construction is building the HNSW graphs. Although the construction of a single HNSW graph is inherently sequential and difficult to parallelize efficiently, different state indexes are independent once their descendant states have been processed. To exploit this property, in parallel settings, we maintain a concurrent queue that stores states whose descendant states have all been processed (i.e., ready states in reverse topological order). Worker threads repeatedly dequeue ready states, construct their corresponding indexes (either raw sets or HNSW graphs), and update the readiness of their predecessor states. Newly ready states are then inserted into the queue. This design enables parallel construction of multiple HNSW graphs to improve construction efficiency while preserving correctness. Our experimental results show that with 16 threads, the index can be constructed within 2 hours on datasets with hundreds of millions sequence length.
\section{Discussions}
In this section, we discuss the maintenance of \textsc{VectorMaton} index.

\vspace{1mm}
\noindent\textbf{Insertion.} Since the automaton is constructed incrementally, a newly arriving vector–sequence pair can be processed by extending the automaton with the new sequence and updating the corresponding ID sets. To avoid rebuilding state indexes via a reverse topological traversal after each insertion, we update indexes online. During ID propagation along the suffix-link chain, we directly insert the new vector ID into each affected state’s index. If the state maintains a raw ID set, we simply append the ID; if it maintains an HNSW index, we insert the vector into the corresponding graph. This incremental update strategy preserves correctness while avoiding expensive global index reconstruction.

\vspace{1mm}
\noindent\textbf{Deletion.} Deletion is more challenging, as both the automaton structure and the HNSW index do not naturally support efficient structural removal. Instead of modifying the structure, we adopt a lazy deletion strategy. Specifically, given a vector–sequence pair to be removed, we repeat the insertion-style traversal over the sequence to locate all affected suffix states. For each such state, we mark the corresponding vector ID as deleted in its index. During query processing, these marked IDs are filtered out from the returned results. This approach avoids expensive structural updates while preserving correctness, at the cost of a lightweight filtering step at query time. We remark that designing an efficient garbage collection mechanism to reclaim space from deleted entries remains challenging and is a direction for future work.
\section{Experiments}
\label{lab:exp}
\begin{table}[t]
    \setlength{\tabcolsep}{3.5mm}{
    \begin{tabular}{l|c|c|c}
    \hline
    \textbf{Datasets} & \textbf{No. vectors} & \textbf{Total seq. len.} & \textbf{Dim.}\\
    \hline\hline
    spam \cite{spam} & 489 & 13,643 & 384\\
    \hline
    words \cite{efarrall2024wordembeddings} & 8,000 & 56,209 & 3,072\\
    \hline
    mtg \cite{mtg} & 21,550 & 1,504,633 & 1,152\\
    \hline
    arxiv \cite{arxiv} & 157,605 & 9,851,413 & 768\\
    \hline
    prot \cite{prot} & 455,692 & 116,326,099 & 1,024\\
    \hline
    code \cite{code} & 1,838,414 & 40,940,167 & 768\\
    \hline
    \end{tabular}}
    \caption{Datasets used in our experiments.}
    \label{tab:datasets}
    \vspace{-6mm}
\end{table}
\begin{figure*}[t]
    \includegraphics[width=1.\linewidth]{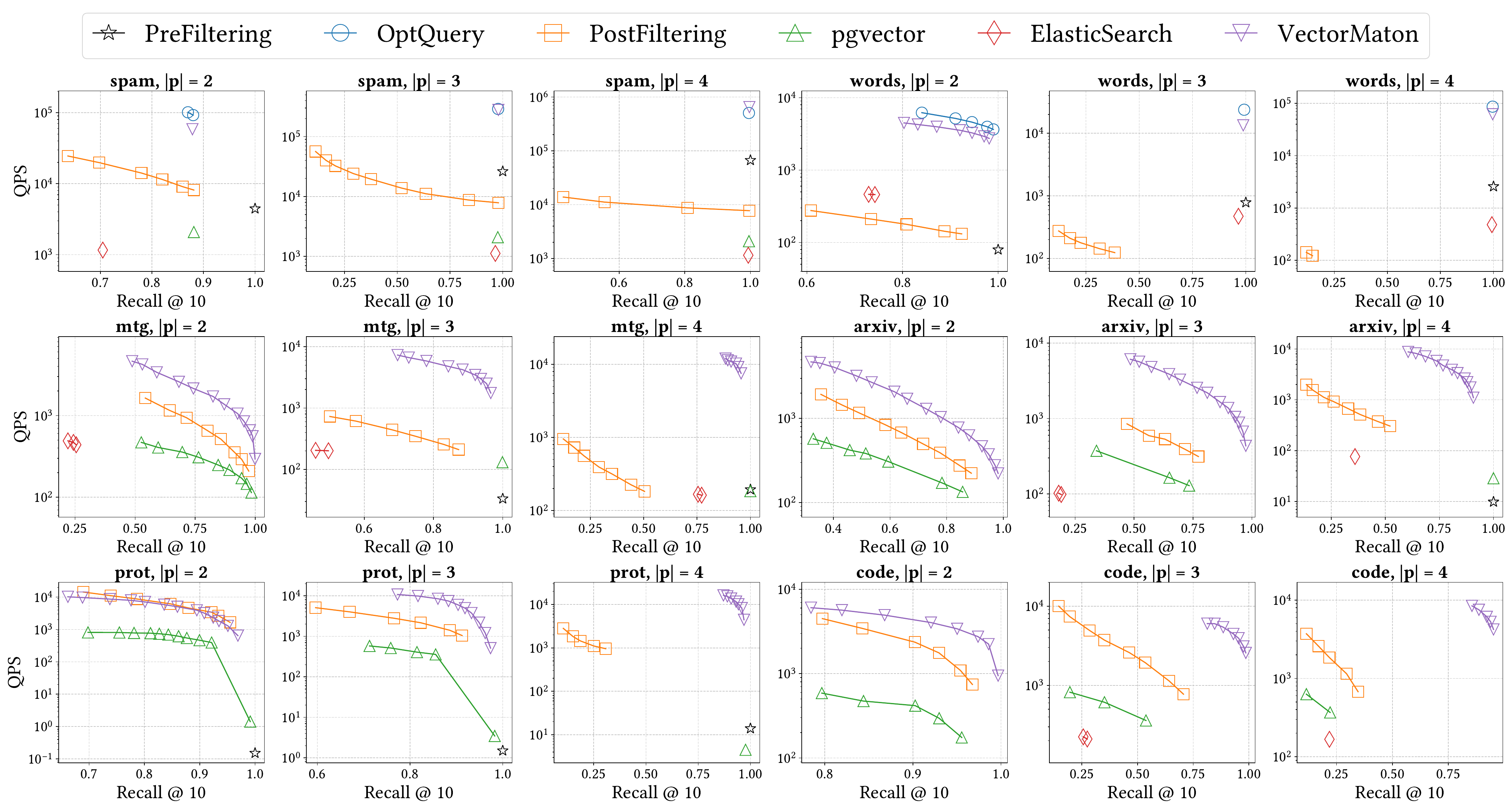}
    \vspace{-8mm}
    \caption{QPS vs. recall (PreFiltering is omitted if its QPS is significantly lower).}
    \label{fig:query}
    \vspace{-4mm}
\end{figure*}
\begin{figure}[t]
    \includegraphics[width=1.\linewidth]{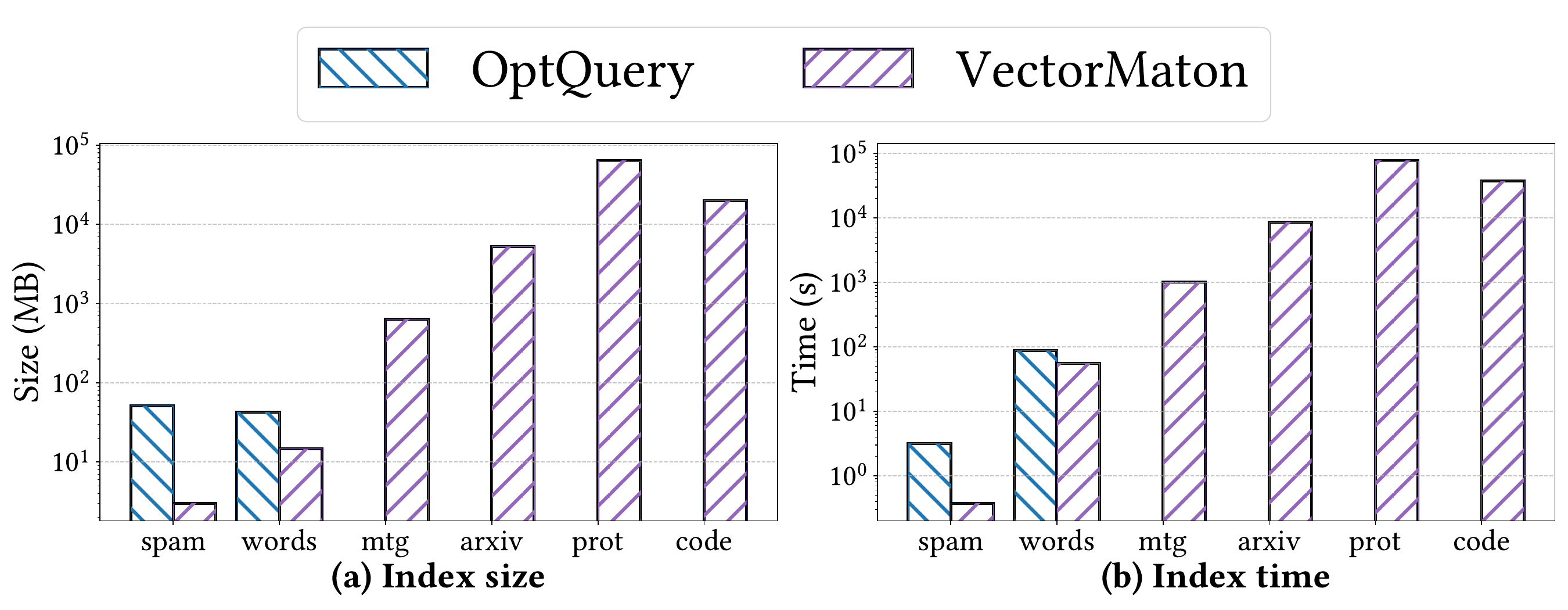}
    \vspace{-8mm}
    \caption{Index size and construction time.}
    \label{fig:memory-and-time}
    \vspace{-2mm}
\end{figure}
\begin{figure*}[t]
    \includegraphics[width=1.\linewidth]{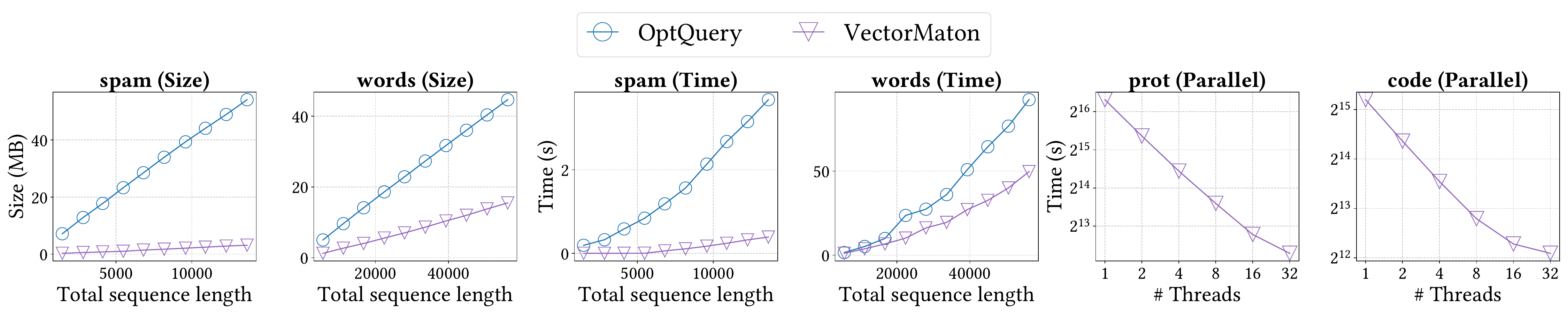}
    \vspace{-8mm}
    \caption{Index scalability test and parallelized index construction time.}
    \label{fig:scalability}
    \vspace{-4mm}
\end{figure*}

\subsection{Setup}
\noindent\textbf{System environment.} All experiments were conducted on a machine equipped with an AMD Ryzen Threadripper PRO 7975WX 32-core processor and 755 GB of RAM. The code was compiled using GCC 11.4.0 on Ubuntu Linux with the \texttt{-O3} optimization flag enabled.

\vspace{1mm}
\noindent\textbf{Datasets.} We evaluate our method on a diverse set of datasets collected from Hugging Face and the SpamAssassin corpus, as summarized in Table~\ref{tab:datasets}. The detailed descriptions are provided below: (1) \textbf{Spam} \cite{spam}: a dataset of spam emails from the SpamAssassin corpus. Each record consists of an email title used as the sequence, and a pretrained embedding of the email content (generated by \texttt{all-MiniLM-L6-v2}) as the vector; (2) \textbf{Words} \cite{efarrall2024wordembeddings}: a dataset of word embeddings. Each record consists of a word (a sequence of letters) as the sequence and its pretrained word embedding as the vector; (3) \textbf{MTG} \cite{mtg}: a dataset of image embeddings. Each record consists of an image description as the sequence and a pretrained image embedding as its vector; (4) \textbf{ArXiv} \cite{arxiv}: a dataset of paper titles and embeddings. Each record consists of a paper title as the sequence and a text embedding (generated by \texttt{all-mpnet-base-v2}) as its vector; (5) \textbf{SwissProt} \cite{prot}: a subset of the UniProt protein sequence database. Each record consists of a protein sequence as the sequence and its structural embedding (generated by ProtBERT) as the vector; (6) \textbf{CodeSearchNet} \cite{code}: a dataset of code snippets. Each record consists of a function name as the sequence and a code embedding (generated by CodeBERT) as its vector.

\vspace{1mm}
\noindent\textbf{Baselines.} We compare \textsc{VectorMaton} against four baselines: (1) \textbf{OptQuery}: an index-based optimal query baseline described in Algorithm~\ref{alg:pc-hnsw}; (2) \textbf{PreFiltering}: a filter-then-search approach (Algorithm~\ref{alg:filtering}), where we first use the enhanced suffix automaton as the filtering index $S$ to retrieve candidate vectors, and then perform vector search over the filtered subset; (3) \textbf{PostFiltering}: a search-then-filter approach (Algorithm~\ref{alg:filtering}), which first performs ANN search over the full dataset and subsequently filters results based on the pattern constraint; (4) \textbf{Pgvector}~\cite{pgvector}: a vector extension of PostgreSQL, which supports hybrid SQL queries with vector similarity based on HNSW index and \texttt{LIKE} predicates. To minimize disk I/O overhead and ensure fair comparison, we store all data in temporary tables and configure \texttt{temp\_buffers} to 128\,GB so that query processing is performed entirely in memory; (5) \textbf{ElasticSearch}~\cite{elasticsearch-vector}: an engine that supports hybrid queries by combining vector similarity based on HNSW index with keyword-based filtering (e.g., \texttt{match} or \texttt{wildcard} predicates). To minimize disk I/O overhead and ensure fair comparison, we allocate sufficient JVM heap space to keep all vectors and indices in memory.

As we are the first to formally study pattern-constrained ANNS, there are no directly comparable existing baselines beyond pgvector \cite{pgvector} and ElasticSearch \cite{elasticsearch-vector}. We note that ElasticSearch is implemented in Java, whereas the other evaluated methods, including ours, are implemented in C or C++. Such implementation differences may lead to variations in runtime overhead and latency. In addition, pgvector operates within a SQL-based framework, which requires SQL parsing and query planning prior to execution. This architectural design may introduce additional overhead compared to our native implementation over the queries.

\vspace{1mm}
\noindent\textbf{Parameter settings.} PreFiltering does not involve any ANN-specific tuning parameters. For the remaining methods, including \textsc{VectorMaton}, we vary the HNSW search parameter $ef\_search$ in the range $[8, 1024]$ and sample representative points to plot the QPS–recall curves. ElasticSearch does not support adjusting $ef\_search$, and we adjust its parameter ``num\_candidates'' alternatively. Unless otherwise specified, the maximum degree of the HNSW graph is set to $M=16$, and the construction parameter $ef\_con$ is set to 200. For \textsc{VectorMaton}, the threshold $T$ of constructing a HNSW graph is set to 200.

\vspace{1mm}
\noindent\textbf{Queries.} We generate 1,000 queries for each pattern length $|p|\in{2,3,4}$ and set $k=10$ (i.e., querying 10-NNs). For each query, the vector is generated using a random number generator, while the pattern is randomly sampled from substrings of the specified length that appear in the sequence collection.


\subsection{Query performance}
\noindent\textbf{Overall performance.} Figure~\ref{fig:query} reports the query performance of \textsc{VectorMaton} and all baselines. OptQuery encounters out-of-memory (OOM) errors on the MTG, ArXiv, SwissProt, and CodeSearchNet datasets, and is therefore omitted from the corresponding plots. Pgvector does not support constructing HNSW indexes for vectors with dimensionality larger than 2000 and is thus excluded from the Words dataset. ElasticSearch exhibits a recall less than 0.1 on SwissProt dataset and CodeSearchNet dataset ($|p|=2$ case), and is thus omitted. PreFiltering is omitted in cases where its QPS is less than 10\% of the lowest QPS achieved by the other methods.

Overall, \textsc{VectorMaton} achieves a performance trade-off comparable to OptQuery (when OptQuery is feasible) and outperforms the remaining baselines across most datasets and parameter settings. For example, on the CodeSearchNet dataset with pattern length $|p|=3$, the lowest recall achieved by \textsc{VectorMaton} exceeds the highest recall achieved by PostFiltering, while delivering approximately 3$\times$ higher QPS. On the SwissProt dataset with $|p|=2$, PostFiltering performs comparably to \textsc{VectorMaton}, as short patterns filter out very few sequences and thus have limited impact on the recall of PostFiltering. Pgvector adopts a hybrid strategy combining pre-filtering and post-filtering, and may occasionally reach recall $=1$ under certain parameter settings. While ElasticSearch provides a tunable ``num\_candidates'' parameter, we observe that different parameter values do not affect its recall and QPS much, and thus it shows only few points on all datasets.

\vspace{1mm}
\noindent\textbf{Effects of varying $|p|$.} To evaluate the impact of pattern selectivity, we vary the pattern length $|p|\in\left\{2,3,4\right\}$ for each dataset. The results are shown in Figure~\ref{fig:query}. As $|p|$ increases, the number of vectors satisfying the pattern constraint generally decreases, leading to lower selectivity. Under this setting, both PostFiltering and pgvector experience performance degradation, as they must expand the search space (e.g., increasing $ef\_search$ or scanning more candidates) to compensate for the higher fraction of filtered-out results and maintain recall. In contrast, \textsc{VectorMaton} and PreFiltering benefit from larger $|p|$. A longer pattern reduces the size of the filtered vector set $V_p$, thereby shrinking the effective search space of \textsc{VectorMaton} and PreFiltering. As a result, \textsc{VectorMaton} achieves improved query efficiency while maintaining high recall, and the query efficiency of PreFiltering also increases as $|p|$ increases.

\begin{table}[t]
\centering
\resizebox{\linewidth}{!}{
\begin{tabular}{l|c|c|c|c|c|c}
\hline
\textbf{Datasets} & spam & words & mtg & arxiv & prot & code \\ 
\hline
\textbf{Reduction (size)} & 71.3\% & 58.6\% & 66.3\% & 64.7\% & 58.8\% & 57.9\% \\ 
\hline\hline
\textbf{Datasets} & spam & words & mtg & arxiv & prot & code\\ 
\hline
\textbf{Reduction (time)} & 43.6\% & 42.1\% & 43.8\% & 59.7\% & 18.9\% & 38\% \\ 
\hline
\end{tabular}}
\caption{Reduction of index size and construction time using proposed index strategies.}
\label{tab:ablation}
\vspace{-6mm}
\end{table}
\begin{figure*}[t]
    \centering
    \includegraphics[width=.65\linewidth]{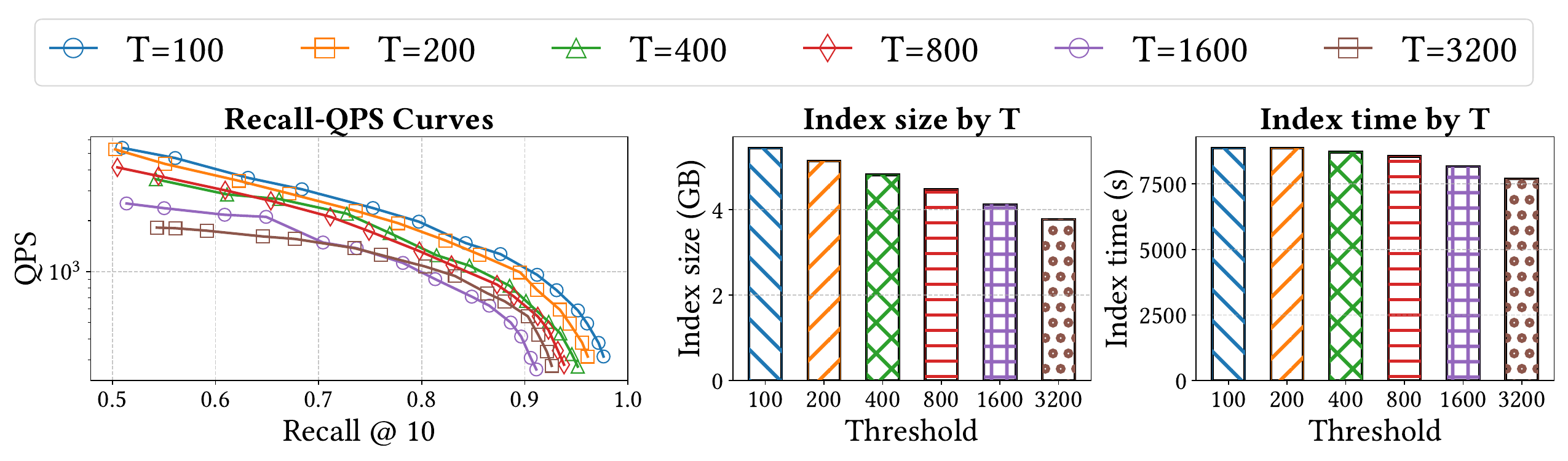}
    \vspace{-4mm}
    \caption{Parameter study of build threshold $T$.}
    \label{fig:threshold}
    \vspace{-4mm}
\end{figure*}
\subsection{Index construction}
\noindent\textbf{Index size.} Figure~\ref{fig:memory-and-time}(a) compares the index size of OptQuery and \textsc{VectorMaton} on each dataset. Overall, \textsc{VectorMaton} consistently consumes significantly less space than OptQuery. Specifically, OptQuery requires approximately 18$\times$ and 3$\times$ more space than \textsc{VectorMaton} on the Spam and Words datasets, respectively. Moreover, OptQuery encounters OOM errors on the MTG, ArXiv, SwissProt, and CodeSearchNet datasets, while \textsc{VectorMaton} successfully constructs indexes on all of them. The reason is that OptQuery requires $O(m^2)$ space whereas \textsc{VectorMaton} requires $O(m^{1.5})$ space in the worst case, where $m$ denotes the total sequence length. As $m$ grows, the gap between the two approaches becomes increasingly significant, and this leads to the OOM of OptQuery. On the Words dataset, however, each sequence is relatively short, which prevents OptQuery from exhibiting substantial space blowup.

\vspace{1mm}
\noindent\textbf{Index construction time.} Figure~\ref{fig:memory-and-time}(b) compares the index construction time of OptQuery and \textsc{VectorMaton} across all datasets. \textsc{VectorMaton} consistently outperforms OptQuery, achieving approximately $9\times$ and $1.5\times$ speedup on the Spam and Words datasets, respectively. For OptQuery and \textsc{VectorMaton}, the index construction time is in proportional to index size. Therefore, the substantial reduction in index size achieved by \textsc{VectorMaton} directly translates into faster construction time.

\vspace{1mm}
\noindent\textbf{Index scalability test.}
We evaluate the scalability of index size and construction time on Spam and Words datasets. For each dataset, we construct the OptQuery and \textsc{VectorMaton} index over progressively larger subsets $\left\{10\%, 20\%,\cdots,100\%\right\}$ of the data and plot the resulting size-length and time-length curves. Figure \ref{fig:scalability} (spam, words) illustrates the results. Although the theoretical worst-case space complexity is $O(m^2)$ for OptQuery and $O(m^{1.5})$ for \textsc{VectorMaton}, the empirical space costs exhibit near-linear growth in practice for both methods. This behavior arises because real sequences typically contain repeated substrings, resulting in fewer distinct patterns to index than the worst-case bound. In addition, we observe the growth rate of \textsc{VectorMaton} is significantly lower than that of OptQuery. As the dataset size increases, the gap between the two methods becomes larger. These results demonstrate that \textsc{VectorMaton} scales gracefully with increasing data size and can effectively handle large-scale sequence collections.

\vspace{1mm}
\noindent\textbf{Parallelized index construction.}
We evaluate the scalability of parallel index construction on two large-scale datasets, SwissProt and CodeSearchNet. For each dataset, we construct the index using 1, 2, 4, 8, 16, and 32 threads, respectively. Figure~\ref{fig:scalability} (prot, code) reports the results. The results show that \textsc{VectorMaton} scales effectively in a parallel setting, achieving near-linear speedup when the number of threads does not exceed 16. Beyond this point, the speedup decreases. This is because our parallelization strategy distributes the construction of different state indexes across threads, while the construction of each individual index (e.g., an HNSW graph) remains sequential. As the number of threads increases, the overall runtime becomes dominated by a few large state indexes whose construction cannot be further parallelized, leading to reduced scalability gains. Nevertheless, with 16 threads, the index construction time is reduced to less than two hours on both datasets, which is efficient for large-scale datasets.

\vspace{1mm}
\noindent\textbf{Ablation study of index strategies.}
Recall that \textsc{VectorMaton} incorporates two design strategies: index reuse across states and selective index construction (skip-build). We conduct an ablation study to evaluate their contributions, and the results are summarized in Table~\ref{tab:ablation}. The results show that they effectively reduce index size and construction time, with an up to 71.3\% reduction in index size and 59.7\% reduction in index time. The index reuse strategy minimizes redundancy of overlapping vector subsets across related states, while the skip-build strategy reduces the number of HNSW graphs constructed by avoiding index construction for small ID sets. These optimizations improve construction efficiency without compromising query performance.

\vspace{1mm}
\noindent\textbf{Effects of parameter $T$.}
In the skip-build strategy, we introduce a threshold parameter $T$ to determine whether an HNSW index should be constructed for a state. If the size of the associated vector set of a state is smaller than $T$, we store the raw ID set instead of building an HNSW graph. Figure~\ref{fig:threshold} illustrates the impact of varying $T$, where the queries are generated with mixed pattern lengths $|p|\in\left\{2,3,4\right\}$. As $T$ increases, both index size and construction time decrease, since more states skip graph construction. Meanwhile, query performance remains stable within a moderate range of $T$, indicating that brute-force search over small vector sets is sufficiently efficient. However, when $T$ becomes too large, more states rely on brute-force search, which may degrade query performance. This suggests a trade-off between index construction cost and query efficiency when selecting $T$.
\section{Related works}
\label{sec:related}
\textbf{ANN approaches.} Approximate nearest neighbor search (ANNS) methods can be generally classified into four categories: (1) tree-based methods~\cite{kdtree,balltree,ringcovertree}; (2) hashing-based methods~\cite{pm-lsh,lshann2,det-lsh,simhash}; (3) quantization-based methods~\cite{pqscan,treepq,rabitq,rabitqenhanced}; and (4) graph-based methods~\cite{hnsw,nsg,diskann}. Tree-based methods partition the data space hierarchically and organize points into recursive structures, including KD-tree~\cite{kdtree}, ball tree~\cite{balltree}, and ring-cover tree~\cite{ringcovertree}. Hashing-based methods include locality-sensitive hashing (LSH)~\cite{pm-lsh,lshann2,det-lsh} and SimHash \cite{simhash}. LSH ensures that nearby points collide in the same bucket with high probability. SimHash~\cite{simhash} implements random hyperplane hashing for angular similarity. Quantization-based methods \cite{pqscan,treepq,rabitq,rabitqenhanced} encode high-dimensional vectors into compact representations using learned codebooks and approximate distances in the compressed space. Graph-based methods construct a proximity graph over the dataset and perform query processing via greedy or best-first graph traversal. Representative methods include HNSW~\cite{hnsw}, NSG~\cite{nsg}, and DiskANN~\cite{diskann}.

\vspace{1mm}
\noindent\textbf{Filtered ANN approaches.} 
Existing filtered ANN approaches mainly focus on attribute-based filtering~\cite{unify,attributeannsurvey,filtereddiskann,ivf-subset} and range-based filtering~\cite{rangepq,dynamicrfann,serf,digra,edyamicrfann,irangegraph}. Attribute-based filtering includes selecting vectors whose categorical attribute equals a queried value~\cite{filtereddiskann}, or whose attribute set contains the queried attribute set~\cite{unify}. Range-based filtering considers numerical constraints associated with each vector \cite{rangepq,dynamicrfann,serf,digra,edyamicrfann,irangegraph}. The objective is to answer ANNS queries while restricting results to vectors whose associated numeric value falls within a specified query interval.
\section{Conclusion}
In this paper, for the first time, we introduce the problem of ANNS with pattern constraints. To address this problem, we propose \textsc{VectorMaton}, an index that jointly supports similarity search and pattern filter. Experimental results show that \textsc{VectorMaton} achieves a favorable trade-off among query latency, recall, and index space consumption compared with baseline methods. Moreover, the index size grows empirically linearly with respect to the dataset size. In future work, we plan to extend \textsc{VectorMaton} to support dynamic updates, enabling efficient insertions and deletions for real-time ANN workloads.

\bibliographystyle{ACM-Reference-Format}
\balance
\bibliography{reference}
\clearpage
\appendix

\section{Proof of Lemma \ref{lemma:transition}}
Let $poslist(p_A)=\left\{(id_j,pos_j)\right\}_{j=1}^k$. Since $p_B = p_A \cdot c$, an occurrence of $p_A$ at position $(id_j, pos_j)$ can be extended by $c$ if and only if $pos_j < |s_{id_j}| - 1$ and the next character is $c$. Therefore, $poslist(p_B)=\left\{(id_j,pos_j+1)\mid 1\leq j\leq k,s_{id_j}[pos_j+1]=c\right\}$. For any other pattern $p_A' \in A$, by definition of equivalence classes we have $\mathit{poslist}(p_A') = \mathit{poslist}(p_A)$. Hence, extending $p_A'$ by the same character $c$ yields the same set of occurrences and thus the same position list as $p_B$. Consequently, $p_A' \cdot c$ belongs to the same equivalence class $B$, which establishes a well-defined transition from state $A$ to state $B$ labeled by $c$.

\section{Proof of Lemma \ref{lemma:exact-cover}}
In index reuse strategy, we construct $I_j$ only over the difference set $V_j\setminus I_k$. Therefore, it directly draws the result that $I_j\cup I_k=V_j$ and $I_j\cap I_k=\emptyset$.

\section{Proof of Lemma \ref{lemma:case1}}
Let $A$ be a suffix state with position list $poslist(A)={(x_k,y_k)}$.
If there is no outgoing transition from $A$ labeled $c$, then for every $(x_k,y_k)\in poslist(A)$, either $y_k+1$ exceeds the length of $s_{x_k}$ or $s_{x_k}[y_k+1]\neq c$. Hence, none of the existing occurrences of patterns in $A$ can be extended by $c$. Since we are currently processing position $(i,j)$ and the next symbol is $c=s_i[j+1]$, appending $c$ introduces a new occurrence ending at $(i,j+1)$. Therefore, extending $A$ by $c$ yields a new equivalence class with position list $\left\{(i,j+1)\right\}$.

\section{Suffix link and analysis}
\subsection{Correctness}
In this section, we will prove the correctness of our automaton construction algorithm in an on-line faction (i.e., the correctness of the incremental process). We first present the definition of the \emph{suffix link}~\cite{automaton}. While it is originally introduced for the single-sequence setting, the concept naturally extends to our multi-sequence scenario.
\begin{definition}[Suffix link]
Given a state $A$ whose maximal pattern is $s$, the suffix link $\mathrm{link}(A)$ points to the state whose maximal pattern is the longest proper suffix of $s$ that is not equivalent to $s$.
\end{definition}

We next show that all suffix states can be organized into a chain of strictly decreasing maximal pattern lengths via suffix links.

\begin{lemma}
\label{lemma:suffix-chain}
Consider sequence $s_i$ currently being processed, and suppose the algorithm has processed position $j$. Let $A$ be the state corresponding to pattern $s_i[0..j]$. Then the set of all suffix states is precisely $\left\{A,\mathrm{link}(A),\mathrm{link}(\mathrm{link}(A)),\cdots,0\right\}$, where $0$ denotes the initial state.
\end{lemma}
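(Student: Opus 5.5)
The plan is to prove the two inclusions separately, using only the definitions of equivalence class, maximal pattern, and suffix link, together with one structural fact about suffixes. The fact is: if $u$ is a suffix of $w$, then every occurrence of $w$ ending at $(id,pos)$ gives an occurrence of $u$ ending at $(id,pos)$, so $\mathit{poslist}(w)\subseteq\mathit{poslist}(u)$. Shortening a pattern can therefore only enlarge its position list. It follows that each equivalence class consists of the suffixes of its maximal pattern whose lengths fill a contiguous interval $(\ell_{\min},\ell_{\max}]$. Indeed, if $u$ is a suffix of $w$, $v$ lies between them in length, and $\mathit{poslist}(u)=\mathit{poslist}(w)$, then $\mathit{poslist}(w)\subseteq\mathit{poslist}(v)\subseteq\mathit{poslist}(u)$ forces equality. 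From this, the suffix link of a state with maximal pattern $s$ and minimal length $\ell_{\min}$ lands on the class containing the suffix of $s$ of length $\ell_{\min}$. That suffix is the longest one not equivalent to $s$, and it is the maximal pattern of its own class, which is what the definition of $\mathrm{link}$ presupposes.

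For the first inclusion (chain $\subseteq$ suffix states), I would show that every state on the chain has $(i,j)$ in its position list, or is state $0$. The state $A$ contains $s_i[0..j]$, which ends at $(i,j)$. By the contiguity fact, the patterns of $A$ are exactly the suffixes of $s_i[0..j]$ with lengths in $(\ell_{\min}(A),\ell_{\max}(A)]$. Then $\mathrm{link}(A)$ contains the next shorter suffix of $s_i[0..j]$, so it again contains $(i,j)$ in its position list. Iterating, the states on the chain partition the lengths $j+1, j, \ldots, 1, 0$ into consecutive intervals. The chain terminates at $0$, the class of the empty pattern, which has an empty position list by convention. Lengths strictly decrease, so the chain is finite.

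For the reverse inclusion, take any suffix state $X\neq 0$ with $(i,j)\in\mathit{poslist}(X)$. Every pattern in $X$ ends at position $j$ of $s_i$, so it is a suffix of $s_i[0..j]$ of some length $\ell\ge 1$. By the partition above, that suffix lies in exactly one chain state $Y$, and because equivalence classes are disjoint we get $X=Y$. The empty-poslist case is only the initial state $0$, which is already on the chain.

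The main obstacle is making the first paragraph rigorous in the multi-sequence setting. I must check that the suffix-link target is a genuine class whose maximal pattern is that shorter suffix, and not a longer pattern extended to the left. This holds because the maximal pattern of the target class is the longest pattern with that position list. A left-extension of the suffix of length $\ell_{\min}$ that kept the same position list would then have to be equivalent to a longer suffix of $s$, contradicting the choice of $\ell_{\min}$. I would also fix the convention that the empty pattern occupies state $0$, so the chain terminates correctly.
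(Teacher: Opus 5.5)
Your proposal is correct and follows the paper's two-inclusion argument along the suffix-link chain. The difference is that you actually prove the interval structure of equivalence classes and that suffix-link targets are genuine maximal patterns, whereas the paper just invokes these in its reverse inclusion as ``the defining property of suffix links''. Your partition of the lengths $j+1,\dots,0$ among the chain states is also the accurate form of the paper's looser claim that the chain's maximal patterns are \emph{exactly} the suffixes of $s_i[0..j]$.
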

\begin{proof}
By definition of suffix states, a suffix state is a state whose position list is empty or contains $(i,j)$. Hence, every suffix state represents at least a pattern that is a suffix of $s_i[0..j]$. Moreover, the maximal pattern in a suffix state must be also a suffix of $s_i[0..j]$, which is not hard to see.

Let $A$ be the state corresponding to $s_i[0..j]$. Then $A$ is a suffix state with maximal pattern $s_i[0..j]$. By construction of suffix links, $\mathrm{link}(A)$ points to the state representing the longest proper suffix of the maximal pattern of $A$ that belongs to a different equivalence class. Repeatedly following suffix links therefore enumerates states whose maximal patterns are exactly the suffixes of $s_i[0..j]$ in strictly decreasing order of length, until reaching the initial state $0$, which represents the empty pattern. Each of these states contains $(i,j)$ in its position list (or is the initial state), and is thus a suffix state.

Conversely, let $C$ be any suffix state. Then $C$ represents a pattern that is a suffix of $s_i[0..j]$. By the defining property of suffix links in the (generalized) suffix automaton, every suffix of $s_i[0..j]$ is represented by a state on the suffix-link chain starting from $A$. Therefore, $C$ must belong to the set $\left\{A,\mathrm{link}(A),\mathrm{link}(\mathrm{link}(A)),\cdots,0\right\}$.
\end{proof}

Next, we show that the suffix-link chain of suffix states can be partitioned into at most two contiguous segments: (1) states without an outgoing transition labeled $c=s_i[j+1]$, and (2) states with such a transition.

\begin{lemma}
\label{lemma:two-new-states}
Let $A$ be the state corresponding to pattern $s_i[0..j]$, and let $c=s_i[j+1]$ be the next symbol to be processed. Then the suffix-link chain $\left\{A,\mathrm{link}(A),\mathrm{link}^2(A),\cdots,0\right\}$ can be divided into at most two contiguous segments such that:

\begin{itemize}[leftmargin=*]
\item In the first segment (if it exists), no state has an outgoing transition labeled $c$.
\item In the second segment (if it exists), every state has an outgoing transition labeled $c$.
\end{itemize}
\end{lemma}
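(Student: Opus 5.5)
The plan is to prove monotonicity along the chain: if a state on the suffix-link chain has an outgoing transition labeled $c$, then every later state on the chain (closer to $0$) also has one. Once this is shown, the states without a $c$-transition form a prefix of the chain starting at $A$, and the states with one form the remaining suffix ending at $0$. This gives at most two contiguous segments, either of which may be empty.

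To prove monotonicity, let $X$ be a state on the chain and $Y=\mathrm{link}(X)$. By Lemma~\ref{lemma:suffix-chain}, the maximal pattern $y$ of $Y$ is a proper suffix of the maximal pattern $x$ of $X$. In fact $y$ is a suffix of every pattern in $X$: by the suffix-link definition, all patterns of $X$ are suffixes of $x$ that are longer than $y$.

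Now suppose $X$ has a transition labeled $c$. By the construction that underlies Lemma~\ref{lemma:transition}, this means some occurrence $(id,pos)\in poslist(X)$ satisfies $s_{id}[pos+1]=c$, so $x\cdot c$ occurs in the collection ending at $(id,pos+1)$. Since $y$ is a suffix of $x$, $y$ also occurs ending at $(id,pos)$, and therefore $y\cdot c$ occurs ending at $(id,pos+1)$. Hence $y\cdot c$ is a pattern appearing in the collection, and its equivalence class is a state $B$. Lemma~\ref{lemma:transition} then gives a transition labeled $c$ from $Y$ to $B$. Applying this step inductively down the chain proves monotonicity.

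The main obstacle is a subtlety about what "has an outgoing transition labeled $c$" means during the online construction. It should mean with respect to the automaton built before the new occurrence $(i,j+1)$ is added, that is, over the occurrences processed so far. I will therefore run the argument above on that current collection: only occurrences already inserted count, and the new symbol is not yet counted. The argument is unchanged under this reading, because $(id,pos+1)$ is an already processed occurrence, and the suffix relation between $y$ and $x$ holds independently of which occurrences have been inserted.

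One edge case also needs checking. At the initial state $0$, with the empty pattern, a $c$-transition exists if and only if $c$ has appeared before. This is consistent with the argument, since every state's patterns have the empty pattern as a suffix.
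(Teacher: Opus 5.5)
Your proposal is correct and uses the same idea as the paper's proof: if a longer suffix of $s_i[0..j]$ followed by $c$ occurs in the already-processed collection, then so does every shorter suffix followed by $c$, so $c$-transitions persist down the suffix-link chain. Your one-step monotonicity-plus-induction form is slightly cleaner than the paper's case split, because it also proves what the paper only asserts in its second case, namely that every state from the first $c$-transition state $B$ onward has a $c$-transition.
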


\begin{proof}
Let $A$ be the state corresponding to the pattern $s_i[0..j]$, and consider the suffix-link chain $\left\{A, \mathrm{link}(A), \mathrm{link}^2(A), \ldots, 0\right\}$.

If $A$ has an outgoing transition labeled $c$, then the extended pattern $s_i[0..j+1]$ is already represented in the successor state. By the construction of suffix links, all states reachable via $\mathrm{link}$ from $A$ correspond to shorter suffixes of $s_i[0..j]$. Since $s_i[0..j+1]$ contains these suffixes, each of these states must also have an outgoing transition labeled $c$.

Otherwise, if $A$ has no transition labeled $c$, we follow the suffix links until we reach the first state $B$ that does have a $c$-transition (if any). All states before $B$ lack such a transition, while all states from $B$ onward have it. If no state along the chain has a $c$-transition, then the entire chain forms a single segment without a $c$-transition.
\end{proof}

For the first segment, by Lemma~\ref{lemma:case1}, we create a new state and connect all states in that segment to it via transitions labeled $c$. For the second segment, we only need to examine the first state in the segment.

\begin{lemma}
Let $B$ be the first state in the second segment of the suffix-link chain, and let $C$ be the successor of $B$ via the transition labeled $c$. Let $s(B)$ and $s(C)$ denote the maximal patterns of states $B$ and $C$, respectively. If $|s(C)| = |s(B)| + 1$, then extending by $c$ does not introduce any structural change in the second segment.
\end{lemma}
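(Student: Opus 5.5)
The plan is to show that when $|s(C)|=|s(B)|+1$, adding the occurrence $(i,j+1)$ to $C$ leaves $C$ a single equivalence class with maximal pattern $s(C)=s(B)\cdot c$. It then follows that every later state on the chain already points to the correct successor.

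First, I will identify which patterns $C$ contains. By Lemma~\ref{lemma:transition}, every pattern of $B$ extended by $c$ lies in $C$. The patterns of $B$ are the suffixes of $s(B)$ of length in $(|s(\mathrm{link}(B))|, |s(B)|]$. When $|s(C)|=|s(B)|+1$, the maximal pattern of $C$ is exactly $s(B)\cdot c$. The shorter members of $C$ are suffixes of $s(B)\cdot c$ of length greater than $|s(\mathrm{link}(C))|$. Every such member has the form $q\cdot c$ with $q$ a suffix of $s(B)$.

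Second, I will show these $q$ all lie on the suffix chain at or after $B$, and hence in suffix states. Since $B$ is a suffix state, $s(B)$ is a suffix of $s_i[0..j]$, so $s(B)\cdot c$ is a suffix of $s_i[0..j+1]$. Every pattern in $C$ is therefore a suffix of $s_i[0..j+1]$. For each pattern $q\cdot c\in C$, the pattern $q$ is a suffix of $s_i[0..j]$, so by Lemma~\ref{lemma:suffix-chain} it belongs to a suffix state and its poslist contains $(i,j)$. Its extension $q\cdot c$ therefore gains exactly the new occurrence $(i,j+1)$. Since every pattern of $C$ gains the same single pair, the poslists of the patterns of $C$ stay equal, and $C$ remains one class with $(i,j+1)$ added.

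Third, I will check that nothing else in the second segment changes. Take a later state $B'$ on the chain with $c$-successor $C'$. The pattern $s(B')\cdot c$ is a suffix of $s(B)\cdot c$, and all these suffixes also gain exactly $(i,j+1)$. The previous equivalences among them are therefore preserved, so no class splits and $B'\to C'$ stays valid. The length condition is what excludes the splitting case: if $C$ contained a longer pattern $x\cdot s(B)\cdot c$, that pattern would not gain $(i,j+1)$ unless $x\cdot s(B)$ were also a suffix state. Its poslist would then differ from the others, forcing a clone. So the only structural update is setting $\mathrm{link}(cur)\gets C$ and augmenting the poslists.

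The main obstacle is the second step. It must rule out a member $q\cdot c$ of $C$ whose prefix $q$ is not a suffix of $s_i[0..j]$. Any such $q\cdot c$ would have length at most $|s(C)|$ and would be a suffix of $s(C)$, hence of $s_i[0..j+1]$. I would handle this by arguing via suffix closure along the chain and the maximality of $s(B)$ as first state with a $c$-transition.
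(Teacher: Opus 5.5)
Your proposal is correct and follows the paper's proof. The length condition forces $s(C)=s(B)\cdot c$; every member of $C$ is a suffix of $s(C)$ (the paper's Claim, which you use implicitly) and hence of $s_i[0..j+1]$; so all of $C$ gains $(i,j+1)$ uniformly and no split is needed.

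The ``main obstacle'' you flag is already settled by your first step. Since $q\cdot c$ is a suffix of $s(B)\cdot c$, $q$ is a suffix of $s(B)$, so you need neither Lemma~\ref{lemma:suffix-chain} nor the first-state property of $B$. Your extra third step, which the paper skips, additionally needs each $C'$ to contain only suffixes of $s(C)$; this follows because members of $C'$ share an end position with $s(C)$.
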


\begin{proof}
We first establish the following claim.

\textbf{Claim.}
Let $X$ be a state with maximal pattern $s(X)$. Then every pattern $y$ in the equivalence class of $X$ is a suffix of $s(X)$.

\textbf{Proof of claim.}
Let $(i,j)$ be any occurrence in the position list of $s(X)$. Since all patterns in state $X$ share the same position list, the pattern $y\in X$ must also end at position $j$ in sequence $s_i$. Hence, $s(X)=s_i[j-|s(X)|+1..j]$ and $y=s_i[j-|y|+1..j]$, which shows that $y$ is a suffix of $s(X)$.

Now consider the main statement. Since $|s(C)| = |s(B)| + 1$, the maximal pattern of $C$ is exactly the extension of $s(B)$ by $c$, i.e., $s(C)$ is a suffix of $s_i[0..j+1]$. By the claim, all patterns represented in state $C$ are suffixes of $s(C)$, and they are thereby suffixes of $s_i[0..j+1]$. Thus, when the currently processed sequence is extended by $c$ at position $(i,j+1)$, all patterns in $C$ remain consistent with this extension and belong to the same equivalence class. Consequently, no refinement of state $C$ is required, and no structural change occurs in the second segment.
\end{proof}

We now consider the remaining case where $|s(C)|>|s(B)|+1$. This implies that state $C$ represents patterns strictly longer than the direct extension of $s(B)$ by $c$. In this situation, the extension by $c$ introduces a refinement of the equivalence relation represented by $C$, and the state must be split.

\begin{lemma}
Let $B$ be the first state in the second segment of the suffix-link chain, and let $C$ be the successor of $B$ via the transition labeled $c$. Let $s(B)$ and $s(C)$ denote the maximal patterns of states $B$ and $C$, respectively. If $|s(C)| > |s(B)| + 1$, then for every state $B'$ in the second segment whose $c$-transition points to $C$, the equivalence class obtained by extending $B'$ with $c$ is no longer correctly represented by $C$.
\end{lemma}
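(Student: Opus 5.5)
We argue by exhibiting, for the state $C$, a pattern in its class whose position list is not updated by the new occurrence $(i,j+1)$. We compare this with the pattern $s(B)\cdot c$, whose position list does gain the new occurrence. The plan has three steps.

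\textbf{Step 1: $s(B)\cdot c$ is a proper suffix of $s(C)$.} By Lemma~\ref{lemma:transition}, $s(B)\cdot c$ lies in the class of $C$. By the Claim in the previous proof, every pattern of $C$ is a suffix of $s(C)$. Since $|s(C)|>|s(B)|+1$, the pattern $s(B)\cdot c$ is therefore a proper suffix of $s(C)$. Write $s(C)=w\cdot s(B)\cdot c$ with $|w|\ge 1$.

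\textbf{Step 2: $s(B)\cdot c$ gains the new occurrence.} $B$ is on the suffix-link chain, so by Lemma~\ref{lemma:suffix-chain} $s(B)$ is a suffix of $s_i[0..j]$. Hence $s(B)\cdot c$ is a suffix of $s_i[0..j+1]$. After processing position $j+1$, its position list is $poslist(C)\cup\{(i,j+1)\}$.

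\textbf{Step 3: $s(C)$ does not gain it.} Suppose $s(C)$ were also a suffix of $s_i[0..j+1]$. Then $w\cdot s(B)$ would be a suffix of $s_i[0..j]$, strictly longer than $s(B)$. The lengths of maximal patterns along the chain strictly decrease, and the chain enumerates all suffix classes (Lemma~\ref{lemma:suffix-chain}). So $w\cdot s(B)$ lies in some chain state $B''$ that precedes $B$, or equals $B$ as a non-maximal element; the latter is impossible because it is longer than $s(B)$. By Lemma~\ref{lemma:transition}, $B''$ then has a $c$-transition, since $w\cdot s(B)\cdot c$ occurs. This contradicts the choice of $B$ as the first chain state with a $c$-transition, given the two-segment structure of Lemma~\ref{lemma:two-new-states}. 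Hence $poslist(s(C))$ stays equal to $poslist(C)$, while $poslist(s(B)\cdot c)$ grows. The two patterns, previously equivalent, are no longer equivalent, so $C$ no longer represents a single class.

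\textbf{Extending to every $B'$.} Let $B'$ be any state in the second segment whose $c$-transition points to $C$. Its extension $s(B')\cdot c$ is a suffix of $s(B)\cdot c$, since $B'$ comes later on the chain. So it is a suffix of $s_i[0..j+1]$ and acquires $(i,j+1)$ exactly as in Step 2. It therefore falls into the refined class $poslist(C)\cup\{(i,j+1)\}$, which $C$, still containing $s(C)$, cannot represent. This is why the construction redirects these transitions to the clone.

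\textbf{Main obstacle.} The hardest step is Step 3, namely justifying that $w\cdot s(B)$ being a suffix forces an earlier chain state with a $c$-transition. I would make this rigorous using the strict length ordering in Lemma~\ref{lemma:suffix-chain} together with the minimality of $B$.
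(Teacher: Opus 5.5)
Your proposal is correct and follows essentially the same route as the paper: you place $s(B)\cdot c$ in $C$ as a proper suffix of $s(C)$. You then rule out $s(C)$ ending at $(i,j+1)$, because $s(C)$ without its last symbol would be a longer suffix of $s_i[0..j]$ lying in an earlier chain state with a $c$-transition, contradicting the minimality of $B$. The only differences are cosmetic: you fold the paper's two cases ($s(C)$ is a suffix of $s_i[0..j+1]$, or the reverse) into a single contradiction, and you explicitly carry the conclusion to every $B'$ in the second segment, which the paper leaves implicit.
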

\begin{proof} 
By our previous claim, $s(B)+c$ belongs to $C$ and should be a suffix of $s(C)$. Since $s(B)$ is a suffix of $s_i[0..j]$, either $s(C)$ is a suffix of $s_i[0..j+1]$ or $s_i[0..j+1]$ is a substring of $s(C)$. For the latter case, the conclusion is trivial as $s(C)$ does not occur at $(i,j+1)$, violating the equivalence class of $s(B)+c$. For the former case, this indicates there exists some other state $D$ which extends a sequence $x\in D$ to $s(C)$ via transition $c$, and $x$ is thereby a suffix of $s_i[0..j]$. Therefore, $D$ should be within the suffix-link chain and $x$ is longer than $s(B)$. However, since $B$ is the first state in the second segment, all previous states with longer suffixes of $s_i[0..j]$ should have no outgoing transition labeled $c$, which leads to a contradiction.
\end{proof}

After splitting state $C$ into $C$ and its clone $C'$, we maintain the invariant that
$\mathrm{poslist}(C') = \mathrm{poslist}(C) \cup \{(i,j+1)\}$, while $C$ retains its original position list. We now determine precisely which transitions must be redirected to $C'$.

Let $X$ be a suffix state in the second segment whose $c$-transition previously pointed to $C$. Then its maximal pattern occurs at position $(i,j)$. Extending this pattern by $c = s_i[j+1]$ therefore produces an occurrence ending at $(i,j+1)$. Consequently, the extended pattern has position list equal to $\mathrm{poslist}(C')$, and must be represented by state $C'$. Hence the outgoing $c$-transition of $X$ must be redirected to $C'$.

Conversely, let $Y$ be any state that is not a suffix state whose $c$-transition points to $C$. Extending its maximal pattern by $c$ does not yield an occurrence ending at $(i,j+1)$, and therefore its extended pattern does not include $(i,j+1)$ in its position list. The corresponding equivalence class thus remains $\mathrm{poslist}(C)$, and the transition of $Y$ continues to point to $C$. No redirection is required.

Therefore, exactly the suffix states in the second segment require their $c$-transitions to be redirected from $C$ to $C'$. Combining all above analysis, we conclude that after each extension step, the automaton maintains correct to represent all equivalence classes.
\subsection{Maintenance of suffix links}
As discussed above, processing each symbol creates at most two new states. We now describe how suffix links are maintained. For the state created by cloning, its suffix link is set as the same suffix link value of original state being cloned. For the newly created state $Q$ that is introduced for the first segment (i.e., states without an outgoing transition labeled $c$), its suffix link depends on whether the second segment exists:

\begin{itemize}[leftmargin=*]
\item If the second segment does not exist (i.e., no suffix state has an outgoing transition labeled $c$), then $\mathrm{link}(Q)$ is set to the initial state $0$.
\item Otherwise, let $B$ be the first state in the second segment, and let $C$ be its successor via the transition labeled $c$ after any necessary cloning has been performed. Then we set $\mathrm{link}(Q) = C$.
\end{itemize}

This maintains the invariant that the suffix link of a state points to the state representing the longest proper suffix of its maximal pattern.

\begin{lemma}
The suffix link $\mathrm{link}(Q)$ constructed as above is correct.
\end{lemma}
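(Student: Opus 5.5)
We need to show that $\mathrm{link}(Q)$ is the state whose maximal pattern is the longest proper suffix of $s(Q)$ lying in a different equivalence class from $s(Q)$. Throughout, $Q$ is the new state created for the first segment, and $s(Q)=s_i[0..j+1]$. The argument has three steps: determine $\mathrm{poslist}(Q)$, identify the longest suffix that falls outside $Q$'s class, and then check separately the two cases for the second segment.

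\textbf{Step 1: the class of $Q$.} By Lemma~\ref{lemma:case1}, $\mathrm{poslist}(Q)=\{(i,j+1)\}$. Its patterns are exactly the extensions $x\cdot c$ with $x$ a pattern of a first-segment state. Among the suffixes of $s_i[0..j+1]$, a suffix $y\cdot c$ occurs only at $(i,j+1)$ precisely when $y$ belongs to a state in the first segment. By Lemma~\ref{lemma:suffix-chain} together with Lemma~\ref{lemma:two-new-states}, the suffixes $y$ of $s_i[0..j]$ are ordered by length along the chain. The first segment therefore covers exactly the long suffixes, and the second segment covers the short ones. Consequently the suffixes of $s(Q)$ that are not equivalent to $s(Q)$ are exactly $y\cdot c$ for $y$ represented in the second segment, together with the empty pattern.

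\textbf{Step 2: no second segment.} In this case every nonempty suffix of $s(Q)$ lies in $Q$. The longest suffix not equivalent to $s(Q)$ is then the empty pattern, and it is represented by state $0$. So $\mathrm{link}(Q)=0$ is correct.

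\textbf{Step 3: a second segment exists.} Let $B$ be the first state of the second segment. The longest $y$ occurring in the second segment is the maximal pattern $s(B)$, because states on the chain appear in decreasing order of maximal length. The required suffix is therefore $s(B)\cdot c$, and we must show its class is $C$ after cloning.
\begin{itemize}
\item If $|s(C)|=|s(B)|+1$, the preceding no-change lemma shows that $s(C)=s(B)\cdot c$ and that $C$ already contains $(i,j+1)$ after the update. Hence $C$ is the correct class.
\item If instead we cloned $C$ into $C'$, the redirection argument shows that $B$'s $c$-transition now points to $C'$. Its position list is $\mathrm{poslist}(C)\cup\{(i,j+1)\}$, which is exactly $\mathrm{poslist}(s(B)\cdot c)$. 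For $s(B)\cdot c$ to be the maximal pattern of $C'$, the clone must receive length $|s(B)|+1$, as in the standard algorithm. This ensures no longer pattern is wrongly merged into $C'$.
\end{itemize}

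\textbf{Main obstacle.} The hardest part is justifying in the cloning case that $\mathrm{poslist}(s(B)\cdot c)$ equals $\mathrm{poslist}(C)\cup\{(i,j+1)\}$. The inclusion ``$\supseteq$'' is direct. For ``$\subseteq$'', an occurrence of $s(B)\cdot c$ other than $(i,j+1)$ comes from an old occurrence of $s(B)$ followed by $c$, and that already lies in $\mathrm{poslist}(C)$ by Lemma~\ref{lemma:transition}. A further point to handle is that the cloned state's own suffix link must be copied from $C$ and that $\mathrm{link}(C)$ must be reset to $C'$. This reset is what keeps the chain invariant of Lemma~\ref{lemma:suffix-chain} valid for the next step; I would verify it by the same poslist-containment argument.
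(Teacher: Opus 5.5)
Your proof is correct and follows the paper's argument: you split on whether the second segment of the suffix-link chain exists, obtaining the initial state $0$ in one case and, in the other, the $c$-successor of $B$ (after any cloning), which represents the longest non-equivalent proper suffix $s(B)\cdot c$. You are more careful than the paper in the cloning subcase: you check explicitly that $\mathrm{poslist}(s(B)\cdot c)=\mathrm{poslist}(C)\cup\{(i,j+1)\}$ and that $s(B)\cdot c$ is the maximal pattern of the clone, whereas the paper simply asserts that this suffix ``is represented by state $C$''.
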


\begin{proof}
Let $s(Q)=s_i[0..j+1]$ denote the maximal pattern of the newly created state $Q$.

\textbf{Case 1:} The second segment does not exist.
This means that no suffix state of $s_i[0..j]$ has an outgoing transition labeled $c$. Hence, for every proper suffix $x$ of $s_i[0..j]$, the extension $x+c$ does not occur in the previously processed prefix. Consequently, among all proper suffixes of $s(Q)$, none except the empty pattern corresponds to an existing state in a different equivalence class. Therefore, the longest proper suffix of $s(Q)$ that belongs to a different equivalence class is the empty pattern, represented by the initial state $0$. Thus $\mathrm{link}(Q)=0$ is correct.

\textbf{Case 2:} The second segment exists.
Let $B$ be the first state in the second segment, and let $C$ be the successor of $B$ via transition labeled $c$ after any necessary structural updates. Since $B$ lies on the suffix-link chain of $s_i[0..j]$, $s(B)$ is a suffix of $s_i[0..j]$, and hence $s(B)+c$ is a proper suffix of $s(Q)$. Moreover, all suffix states preceding $B$ (which correspond to strictly longer suffixes of $s_i[0..j]$) do not have a transition labeled $c$. Therefore, no longer proper suffix of $s(Q)$ corresponds to an existing state in a different equivalence class. Thus $s(B)+c$ is the longest proper suffix of $s(Q)$ that belongs to a distinct equivalence class, and it is represented by state $C$. Hence, setting $\mathrm{link}(Q)=C$ satisfies the suffix-link invariant.
\end{proof}

\section{Proof of Theorem \ref{theorem:space-complexity}}
By Lemma~\ref{lemma:num-states}, the number of states in \textsc{VectorMaton} is bounded by $O(m)$, and by Lemma~\ref{lemma:num-id-set}, the total size of all ID sets is bounded by $O(m^{1.5})$. If the symbol set size is constant, each state has at most a constant number of outgoing transitions. Therefore, the total number of transitions is also bounded by $O(m)$, and the overall space consumption is dominated by the storage of ID sets and associated index structures, which is bounded by $O(m^{1.5})$.
\end{document}